\documentclass[11pt]{article}
\usepackage{amsthm, amsmath,amssymb,graphicx}

\newcommand{\comment}[1] {}

\newcommand{\eat}[1]{}

\topmargin 0pt
\advance \topmargin by -\headheight
\advance \topmargin by -\headsep
\textheight 9.0in
\oddsidemargin 0pt
\evensidemargin \oddsidemargin
\marginparwidth 0.5in
\textwidth 6.5in

\newtheorem{theorem}{Theorem}

\newtheorem{definition}{Definition}
\newtheorem{lemma}{Lemma}

\newtheorem{claim}{Claim}

\newcommand{\D}{{\cal D}}
\newcommand{\E}{{\mathbf E}}
\renewcommand{\P}{{\cal P}}
\renewcommand{\S}{{\cal S}}

\title{Multi-Armed Bandit Problems with Delayed Feedback}

\date{} \author{Sudipto Guha\thanks{
Research supported in part by an NSF Award
CCF-0644119 and IIS-0713267.
Part of this work was done while the author
    was visiting Google Research. 
    Department of Computer and Information Sciences,
    University of Pennsylvania, Philadelphia, PA. Email: {\tt
      sudipto@cis.upenn.edu}} \and Kamesh Munagala\thanks{Supported by an Alfred P. Sloan Research  Fellowship,
and by NSF via CAREER award CCF-0745761 and grant CCF-1008065. Department of Computer Science,
    Duke University, Durham, NC 27708-0129. Email: {\tt kamesh@cs.duke.edu}}
 \and Martin P\'al\thanks{Google, 76 9th Ave, New York, NY. Email: {\tt mpal@google.com}} 
}
\begin{document}
\maketitle
\begin{abstract}
In this paper we initiate the study of optimization of bandit type
problems in scenarios where the feedback of a play is not immediately
known. This arises naturally in allocation problems which have been
studied extensively in the literature, albeit in the absence of delays
in the feedback. We study this problem in the Bayesian setting.
In presence of delays, no solution with provable guarantees
is known to exist with sub-exponential running time.

We show that bandit problems with delayed feedback that arise in
allocation settings can be forced to have significant structure that gives us the ability to
reason about this policy. We show a $O(1)$
approximation for a significantly general class of priors.
The structural insights we develop are of key interest and carry over to
the setting where the feedback of an action is available
instantaneously. In particular, we show a simple $2$-approximation for the finite horizon Bayesian bandit problem, improving and generalizing prior work. 
\end{abstract}

\thispagestyle{empty}
\newpage
\setcounter{page}{1}
\section{Introduction}
\label{sec:intro}
In this paper, we consider the problem of iterated allocation of
resources, when the effectiveness of a resource is uncertain, and is
learnt after some delay. Allocation of resources under uncertainty is
a central problem in a variety of disciplines, notably in learning
\cite{CL} and stochastic control \cite{book}. In these problems, we
are asked to make a series of allocation decisions, based on past
outcomes. Since the seminal contributions of Wald~\cite{Wald47} and
Robbins~\cite{Robbins}, a vast literature, including both optimal and
near optimal solutions, has been developed, see references in
\cite{Auer,Tsitsiklis,book,CL}. 
Of particular interest is the celebrated Multi-Armed Bandit (MAB) problem,
where an agent decides on allocating resources between competing
actions (arms) with uncertain rewards and can only take one action at
a time (play the arm). The agent collects the reward and the state of
the played arm is updated. 

However, an overwhelming majority of the literature focuses on scenarios that
assume {\em instantaneous} (or negligible in comparison to the
horizon) revelation of the outcome of each allocation.  In an early
work in mid 1960s, Anderson~\cite{Anderson} introduced the notion of
delayed observations.  Since then, though there have been additional
results~\cite{Suzuki,Choi}, a theoretical guarantee on adaptive
decision making under delayed observations has been elusive, and the
computational difficulty in obtaining such has been commented upon
in~\cite{sched,Armitage,Simon,Eick}. Recently, this issue of delays has been thrust to the fore due to the increasing application of iterative allocation problems in online advertising.

\subsection{Problem Statement and Motivating Examples}
We now concretely define the Bayesian multi-armed bandit problem with delayed feedback.  There is a bandit with $n$ independent arms. When arm $i$ is played, the reward is drawn $i.i.d.$ from distribution $D_i$, which is unknown. However, a prior $\D_i$ is specified over possible $D_i$. When arm $i$ is played, the feedback about its reward outcome is learned only after $\delta_i$ steps. The set of observed outcomes so far resolves the prior to a posterior distribution according to Bayes' rule. We also assume each arm has budget $B_i$ on the maximum reward it can accrue; further plays do not accrue reward. In this setting, the goal is to design a {\bf decision policy} for allocating the plays to the arms. A decision policy is a mapping from the current {\em state}, determined by the posterior distributions of each arm; the plays for which feedback is outstanding; and the remaining budgets of each arm to an action of which arm to play.  There is a horizon of $T$ steps, and our goal is to design a polynomial time algorithm that outputs a decision policy maximizing the expected revenue. 

\medskip
As concrete instantiations of this framework and a running example of the space of problems we address, consider the following three examples:

\medskip\noindent
{\bf 1. Online Determination of Web Content:} This is described by Agarwal etal in \cite{agarwal}. The goal is to present different websites/snippets to an user and induce the user to visit these pages. For each user (type) $j$ there is prior information about the propensity of the user to be interested in content $i$. This propensity is described by a distribution $D_{ij}$ drawn from a prior distribution $\D_{ij}$. The priors are constructed from historical estimation. As the pages are displayed, we resolve this prior; however, the characterization of the goodness a visit corresponds to multiple signals which is only known after a delay
$\delta_{ij}$. The delay can also arise from batched updates and systems issues. Note that there a number of ``budget constraints'' where each page can be displayed $T_j$ times.  For one user, this maps to the problem described above. The authors of~\cite{agarwal} present a number of different heuristics based on greedy policies.

\medskip\noindent
{\bf 2. Unmanned Aerial Vehicles:} Bandit  problems are used widely in stochastic control, where often actions are not immediate in their feedback. For example the difficulty in routing multiple UAVs among an uncertain terrain  is the remoteness of the controller \cite{leny}, which induces delays.

\medskip\noindent
{\bf 3. Budgeted Allocation Problem:} In this problem, there are $n$ bidders (advertisers competing for a collection of $m$ impressions (advertisement slots). Assume that the impressions arrive one at a time, and correspond to the "semi-online" framework where the number of impressions of type $j$ is $T_j$ and known beforehand.  Each bidder $i$ has a budget $B_i$ on the total amount she is willing to pay, and bids $b_{ij}$ for impression of type $j$.  When advertiser $i$ is allocated an impression of type $j$, she is charged her bid $b_{ij}$ only if an acquisition occurs. This is referred to as the cost-per-action (CVR) model. The search engine allocates the impressions based on its estimate of the click or conversion probability (CVR) $p_{ij}$ of this favorable event, in order to maximize its expected revenue (where if a favorable event occurs for pair $(i,j)$, and $i$'s budget is not exhausted, the contribution to revenue is $b_{ij}$). The classical budgeted allocation problem~\cite{MSVV,free,ChakrabartyGoel,DH09} assumes that (i) the $p_{ij}$ are known; and (ii) feedback about the click or conversions is instantaneous. Stochastic relaxations of the assumption (i) is already considered in \cite{PandeyOlston}, who also provide simulations of  greedy algorithms and consider budgets. But in case of conversions 
(and in case of more refined measurements of clicks as well) there is a natural delay in the feedback, which previous literature largely ignores.

\subsection{Our Results and Techniques}
We present a constant factor approximation to the Bayesian MAB problem with delayed feedback. Before presenting our results, it is instructive to study closely related work. The Bayesian MAB problem has been considered in the model where the feedback is instantaneous, see~\cite{moore,madani,GM09,GKN}, and constant factor approximations are known in this setting. In the absence of delays, all approximation algorithms (as well as the exact Gittins index algorithm in the discounted reward setting~\cite{Gittins,Tsitsiklis}) use the key observation that the posterior distribution of an arm does not change if not played. Using this observation,  it was shown in~\cite{GM09} that there is an approximately optimal policy that sequentially plays the arms. In presence of delays, the state of an arm now needs to capture the plays with outstanding feedback, and this not only causes the state to change even with no plays, but also requires exponential space to specify the state even for a single arm. Finally, any policy has to now interleave plays of the different arms, and hence, existing techniques in~\cite{GM09,GKN,Tsitsiklis} do not apply in any straightforward fashion.  In a sense, the challenge in the delayed setting is not just to balance explorations along with the exploitation, {\em but also to decide a {\bf schedule} for the possible actions both for a single arm (due to delays) as well as across the   arms}.

 At a high level, our algorithm starts with the standard LP relaxation to the problem~\cite{book,GM09}. This relaxation  provides a collection of {\em single-arm} policies whose execution is confined to one bandit arm; the LP enforces constraints on the total expected plays these policies can make. Such a relaxation has been widely used in the absence of delays, since it leads to efficient {\bf decomposable decision policies} (called index policies) that treats every arm independently of other arms.  Note that the joint state space over all the arms is exponential in size. As alluded to above, in the presence of delays, it is not even clear that such decomposable policies are efficiently computable. The most important question that we ask for delayed feedback therefore is: {\em Do there exist near optimum policies that are
decomposable and efficiently computable?} In this paper we develop new techniques to analyze single-arm policies and provide a poly-time algorithm that outputs a decomposable decision policy\footnote{These policies are very close to index policies; the policy of each arm is independent of other arms and hence efficiently computable; however, the policies of different arms cannot be  reduced to a single {\em priority} value.} which is an $O(1)$ approximation of the possibly entangled optimum policy. 

In the rest of the paper we focus on the MAB case; we show in Appendix~\ref{part3} how the results generalizes to multiple MABs or users (budgeted allocation case). This generalization is a consequence of the fact that the algorithm uses the decomposition property (which implies linearilty).

We summarize our main results and technical contributions below:
\begin{itemize}
\item In Section~\ref{part1}, we prove a structural result, {\em Truncation Theorem}. Theorem~\ref{statetheorem}  shows that for arbitrary  state spaces, even in the presence of budgets, delays and other constraints which preserves paths, a partially executed single-arm policy has reward  proportional to the original policy.  This uses a stopping time type  argument and uses the connection between state spaces corresponding to priors and their Martingale properties. This idea leads to shortening the horizon of single-arm policies, and is very general; it directly yields a $2$-approximation for the finite horizon MAB problem with instantaneous feedback. In the absence of budgets, the best known result for this problem was a $12$ approximation~\cite{GM09,GKN}; their approach is infeasible in the presence of budgets.  Our result also shows a factor $2$ approximation for the "irrevocable bandit" problem\footnote{The algorithm is not allowed to revisit an arm.}, improving the factor $8$ result of Farias and Madan \cite{farias}. We also show that factor $2$ is the best possible bound against LP relaxations over single arm policies. This result  has also been used in subsequent  results for MABs with nonlinear objective functions~\cite{GM11b}.

\item In Section~\ref{part2}, using the Truncation theorem, we show a $O(1)$ approximation for  the MAB problem with delayed feedback when $\delta_{i} =  o\left(T/\log T\right)$.\footnote{We can show that using  stochastic domination of Gittins-type indices, we can increase $\delta_i$ to $O\left(\frac{T}{\log \log T}\right)$ and still preserve the $O(1)$ approximation of decomposable policies. However, the proof is very technical and omitted.} Applying the Truncation theorem is not straightforward - a policy could make very few plays initially waiting for feedback after every play, and load all the plays at the very end when it is fairly certain the reward is large. To circumvent this, we develop an interesting compaction and simulation argument - we modify the policy to make more plays upfront, but withhold using the outcomes  until the original policy used them. The resulting structure also implies that our policies are as efficient to compute as standard index policies (without delays)~\cite{Gittins,book}. Finally, the combination of the single-arm policies into a final feasible policy requires a novel priority based scheme. 
\end{itemize}

The chief technical highlight of our work is a new way of accounting for the reward of a single-arm policy using the martingale property of priors that Bayes' rule entails. Traditionally, the Markov Decision Process (MDP) formulation resulted in the reward being accounted non-uniformly: given a posterior distribution, a play yields reward which is the expected value of the posterior. We use the fact that the plays draw $i.i.d.$ rewards from a fixed underlying distribution (whose current belief is encoded in the prior), and hence the expected reward is the same for each play conditioned on this unknown distribution. We use this accounting for both the truncation and compaction steps, leading to simplicity of policies, clarity of analysis, and improved approximation bounds. 

\subsection{Other Related Work}
\label{related}
There is an extensive literature on the MAB problem in the prior-free setting; see~\cite{LaiRobbins,Auer,CL}), and policies with additive {\em regret} guarantees are known. Regret is the difference between the expected reward of the policy and the reward of an omniscient policy which knows all the distributions. However, these results both require the reward rate to be large and large time horizon $T$ compared the number of arms. In the application scenarios mentioned above, it will typically be the case that  the number of arms is very large and comparable to the optimization horizon and the reward rates are low. This motivates the need for a purely multiplicative guarantee instead of additive guarantees. Moreover the analysis of these policies require the plays of the arm with maximum estimated reward to be continuous, which is not true in presence of delays (or budgets). If the delay of arm $i$ satisfies $\delta_{i} =o(n^{1/3}T^{2/3})$ (where $n$ is the number of arms) then standard explore-then-exploit (where we play an arm long enough to start receiving the outcomes) strategy gives sub-linear in $T$ regret. However, constants in the regret term depends on scaling of the rewards.

\renewcommand{\E}{\mathbf{E}}
\newcommand{\Pp}{{{\cal P}}}
\newcommand{\n}{{\overrightarrow{n}}}

\section{Preliminaries}
\label{prelim}
There is a bandit with $n$ independent arms. The arm $i$ underlying reward distribution $D_i$, which is a random variable drawn from a prior distribution $\D_i$. These priors are specified as input. The maximum possible reward that can be extracted from this arm is $B_i$, and though additional plays can be made, they do not accrue additional reward. If an arm is played, the feedback about the reward outcome is
available only after $\delta_i$ time steps. As observations are available the successive posteriors (which serve as priors for the next trial), are produced by the Bayes' Rule. There is a time horizon of $T$ plays. A {\bf decision policy} specifies which arm to play given the current state of each arm, which is captured by the remaining budget and time horizon, posterior distribution, and plays with outstanding feedback for that arm. Each decision policy has a unique expected reward value
that is obtained on executing it (with different execution trajectories differing due to different realizations of the underlying $\{D_i\}$).  {\bf Our goal} is to design a poly-time algorithm, which outputs a policy that approximately maximizes the expected value
derived over the horizon of $T$ plays.  

\newcommand{\p}{\mathbf{p}}
\medskip
\noindent{\bf Input: Priors and Posterior Spaces.} For each arm $i$, the input specifies the space $\S_i$ of possible posterior distributions, which defines a natural DAG, the root of which corresponds to the initial prior $\D_i$.  Every other state $u \in \S_i$ corresponds to a set of observations, and hence to the  posterior $X_{iu}$ obtained by applying Bayes rule to $\D_i$ with the those observations. Playing the arm in state $u$ yields a transition to state $v$ with probability $\p_{uv}$, provided $v$ can be obtained from $u$ in one additional observation; the probability $\p_{uv}$ is simply the probability of this observation conditioned on the posterior $X_{iu}$ at $u$. The expected posterior mean at a state $u \in \S_i$, denoted by $r_u = \E[X_{iu}]$ satisfies the {\bf martingale} property $r_u = \sum_v \p_{uv}r_v$. An example  is in Appendix~\ref{prelim2}.

We need the running time to be poly$(n,T, \sum_i |\S_i|)$, which is comparable to the running times (via per-arm dynamic programming) for computing the standard index policies~\cite{book,Gittins}.

\medskip
\noindent{\bf Budgets:} The budget $B_i$ for arm $i$ can be folded into the description of the posterior space: If the observations leading to the current posterior already violate the budget, the reward of this state is set to $0$. This transformation helps us ignore the budget  in subsequent sections.

\medskip
\noindent{\bf Single-arm Policies.} Given an execution of the global policy $\Pp$, define its {\bf projection} on arm $i$ to be the policy $\Pp_i$ defined by the actions induced on $\S_i$; we term this a {\em single-arm} policy. Note that the global policy may take an action in arm $i$ based on information regarding other arms (an entangled state) -- that side information is lost in the projection. 

Let $\{\Pp_s(i,T)\}$ describe all  single-arm policies of arm $i$ with feedback delay of $\delta_{i}$ steps and horizon $T$.  Each $\Pp_s(i,T)$ is a (randomized) mapping from the current state to one of the following actions: (i) make a play; (ii) wait some number of steps (less or equal to $T$), so that when the result of a previous play is known, the policy changes state; (iii) wait a few steps and make a play (without extra information); or (iv) quit.  The {\em state} of the system is captured by the current posterior $u \in \S_i$, the plays with outstanding feedback, and the remaining time horizon. Note that the state encodes plays with outstanding feedback, and this has size $2^{\delta_i}$, which is exponential in the input. Ameliorating this dependence  is an aspect that we address in our algorithm design.

\begin{definition}
Given a single-arm policy $\Pp_i$ let $R(\Pp_i)$ to be the   expected reward and $N(\Pp_i)$ as the expected number of plays, where the expectation is over the outcomes of the plays of $\Pp_i$. 
\end{definition}

\section{The Truncation Theorem}
\label{part1}
We now show that the time horizon of a single-arm policy on $T$ steps can be reduced to $\beta T$ for constant $\beta \le 1$ while sacrificing a constant factor in the reward. We note that though the statement seems simple, this theorem {\em only} applies to single-arm policies and is {\em not true} for the global policy executing over multiple arms.  The proof of this theorem uses the martingale structure of the rewards, and proceeds via a stopping time argument.

\begin{theorem}
\label{statetheorem}(The Truncation Theorem)
For any arbitrary single arm policy $\P$ which traces out a path over a horizon of $T$ steps in a state space $\S$, then the identical policy $\P'$ that makes at least a $\beta$ fraction of the plays on any decision path, satisfies: (i) $R(\P') \geq \beta R(\P)$ and (ii) $N(\P') \leq N(\P)$.
\end{theorem}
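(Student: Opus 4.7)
The plan is to exploit the accounting trick highlighted in the paper's introduction: conditioned on the (unobserved) underlying reward distribution $D$ drawn from the prior $\D$ of the arm, the rewards produced on successive plays are i.i.d.\ with some non-negative mean $\mu(D)$. Under this view, the state-dependent reward of the standard MDP formulation is replaced by a uniform per-play reward, and the expected reward of any single-arm policy depends, for each fixed $D$, only on the expected number of plays. The truncation theorem should then fall out of a Wald's identity / stopping-time comparison between the play counts of $\P$ and $\P'$.

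Concretely, I would fix the external randomness of the policy, condition on $D$, and let $Z_1,Z_2,\dots$ denote the i.i.d.\ reward outcomes the arm produces on its successive plays. Let $N$ and $N'$ denote the number of plays that $\P$ and $\P'$ make on the realization; the hypothesis of the theorem gives $\beta N \le N' \le N$ pointwise. Each of $N$ and $N'$ is a stopping time with respect to the natural filtration $\mathcal{F}_j = \sigma(Z_1,\dots,Z_{j-1})$, since the decision to attempt the $j$-th play is based on information no finer than $\mathcal{F}_j$ (even in the delayed setting, where the policy may only see a sub-$\sigma$-field of $\mathcal{F}_j$). Applying Wald's identity conditionally on $D$ gives
\[
  \E\!\left[\sum_{j=1}^{N} Z_j \,\Big|\, D\right] = \mu(D)\,\E[N \mid D], \qquad \E\!\left[\sum_{j=1}^{N'} Z_j \,\Big|\, D\right] = \mu(D)\,\E[N' \mid D].
\]
Taking expectation over $D \sim \D$ and using $N' \geq \beta N$ together with $\mu(D) \geq 0$, I obtain $R(\P') \geq \beta R(\P)$, which is (i); part (ii) is immediate because $N' \leq N$ pointwise forces $N(\P') \leq N(\P)$.

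The main obstacle, and the reason the statement is restricted to single-arm policies, is that the i.i.d.\ reward structure only appears after conditioning on the single underlying $D$; for a global policy over multiple arms the plays of different arms interact through the joint decision rule, and Wald's identity does not transplant to that setting. A smaller technical point is the delayed feedback: the policy decides on coarser information than $\mathcal{F}_j$, but passing to a coarser sub-$\sigma$-field preserves the stopping-time property, so Wald's identity is unaffected. Budgets, which are folded into $\S$ by zeroing the rewards of post-budget states, require no separate treatment either: counting all attempted plays in $N$ (including post-budget, zero-reward ones) is consistent with the accounting above, and non-negativity of rewards is what allows the truncation-by-$\beta$ comparison to go through cleanly.
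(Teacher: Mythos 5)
Your proposal is correct and takes essentially the same route as the paper: conditioning on the underlying distribution and reducing the expected reward of a single-arm policy to (mean reward) $\times$ (expected number of plays) is exactly the content of the paper's Claim~\ref{acctpath}, which you obtain by invoking Wald's identity (valid here since $N \leq T$ is bounded and the decision to play is measurable in past outcomes, even under delays) where the paper gives a direct path-counting computation of the same fact. The remaining steps --- the coupled execution giving $N' \geq \beta N$ pointwise on every decision path, non-negativity of $\mu(D)$, integration over the prior, and the trivial observation that truncation cannot increase the number of plays --- coincide with the paper's argument.
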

\begin{proof}
  Let the average reward in the initial state be $\mu$, and the prior
  is characterized by a distribution $f(\mu)$.  Consider the tree
  defined by the policy $\P$. Let $R(\P(\mu)), N(\P(\mu))$ denote the
  expected reward and the number of plays of the policy $\P$ when the
  average reward is fixed at $\mu$. We have $R(\P) = \int R(\P(\mu)) f(\mu) d\mu$.  {\bf The critical part} of the proof is the next claim which
  basically reduces to the fact that the decision to play does not
  affect the outcome.

  \begin{claim}
  \label{acctpath}
  If we execute the policy $P(\mu)$ from any node $u \in \S$ the expected reward
  $R(P(\mu))= \sum_v \mu \ y_{\mu}(v) length(v)$ where
  $\{y_{\mu}(v)\}$ is the distribution induced on the vertices of $\S$
  where $P(\mu)$ stops executing. The $length(v)$ refers to the length
  of the path (in number of edges) from $u$ to $v$.
  \end{claim}
  \begin{proof}
   We observe that if we play at
  a node $u'$, then we generate a reward $\mu$. Now the
  probability of being at $u'$ is the sum of the probability of the
  paths that pass through $u'$. 
{\small
\begin{eqnarray*}
\displaystyle R(P(\mu)) & = & \sum_{u'} \mu \ Pr[\mbox{reaching $u'$}] = \sum_{u'} \sum_{v:u' \mbox{\scriptsize is on a path to $v$, $u' \neq v$}}  \mu \ y_{\mu}(v)
 =  \sum_v \mu y_{\mu}(v)\ length(v)
\end{eqnarray*}
}
  The claim follows.
\end{proof}

\noindent {\em (Continuing Proof of Theorem~\ref{statetheorem})}
Truncating $\P(\mu)$ to $\P'(\mu)$ induces a many to one map over the
paths at which the policy stops.  Let $\{y_{\mu}(v)\}, \{y'_{\mu}(v')\}$ be
the distribution induced on the vertices of $\S$ where $\P(\mu), \P'(\mu)$
stop executing, respectively.

If the path to $v$ got truncated to $g(v)$, then $y'_{\mu}(v') = \sum_{v:
  g(v)=v'} y_{\mu}(v)$.  Further, from the statement of the Lemma,
$length(g(v)) = \beta T \geq \beta length(v)$.  Then using
Claim~\ref{acctpath}, on $\P'(\mu)$,
{\small
\begin{eqnarray*}
 R(\P'(\mu)) & = & \sum_{v'} \mu y'_{\mu}(v')\ length(v') = \sum_{v'}  \mu \ \sum_{v: g(v)=v'} y_{\mu}(v)\ length(v')
 =  \sum_{v'} \mu \sum_{v: g(v)=v'} y_{\mu}(v)\ length(g(v)) \\
& \geq & \sum_{v'} \mu \ \sum_{v: g(v)=v'} y_{\mu}(v)\ \beta\ length(v)
=  \beta \sum_{v} \mu y_{\mu}(v)\ length(v) = \beta R(\P(\mu))
\end{eqnarray*}
}
The last part follows from applying Claim~\ref{acctpath} on
$R(\P(\mu))$. Note that the rewards of each path is {\em not}
preserved up to a factor. But the Claim~\ref{acctpath} introduces an
accounting method where the reward is redistributed over the
length. Note that it was important that the outcome of the play did
not depend on the decision to play.
Integrating over $\mu$, the result of $R(\P')$ follows.  The truncation can not
  increase the expected number of plays, this proves the theorem.
\end{proof}

\subsection{Finite Horizon MAB with Instantaneous Feedback}
\label{finite}
\label{simple}
\newcommand{\finlp}{\mbox{\sc FH}}
\newcommand{\G}{\mathcal{G}}
\renewcommand {\p}{\mathbf{p}} 
As a direct application, we consider the Finite Horizon Bayesian Problem in the delay free setting. To formulate an LP relaxation, we find the optimal single-arm policy for each arm, so that the resulting ensemble of policies have expected number of plays at most $T$.  This yields: 

{\small 
\[ LP1 =  \max_{\Pp_s(i,T)} \left\{  \sum_i R(\Pp_s(i,T)) \ |\   \sum_i  N(\Pp_s(i,T))  \leq  T \right \} \]
}

The solution of LP1, denote this \finlp, is a collection of single arm policies $\P_i$ such that $\finlp \leq \sum_i R(\P_i)$ and $\sum_i N(\P_i) \le T$. We can compute these policies efficiently; see~\cite{GMS,GM09,book,Tsitsiklis}.

\medskip\noindent{\bf Rounding to a Feasible Policy:} 
The rounding scheme is now simple. (Variants have appeared before in several stochastic knapsack type optimization contexts~\cite{DGV,GM09,rudra}.) Order the arms in order of $\frac{R(\P_i)}{N(\P_i)}$. Play $\{\P_i\}$ in that order. If the decision in $\P_i$ is to quit, then we move to the next arm. If at any point we have already made $T$ plays, we stop
and quit the overall policy. 

We first prove a simple algebraic lemma.

\begin{lemma}
\label{concavechain}
Suppose $r_i/w_i$, i = 1,2,\ldots,n is a non-increasing sequence. Let $w=\sum_i w_i$. Then 
{\small
\[\sum_{i<k} r_i \left(1-\frac1w \sum_{j<i} w_i \right) \geq \frac{\left(\sum_{i<k} r_i\right)\left(\sum_{i<k} w_i\right)}{2w} + \sum_{i<k} \frac{r_i w_i}{2w} + \sum_{i<k} r_i \frac{\sum_{j\geq k} w_k}{w} \]
}
\end{lemma}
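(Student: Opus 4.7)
The plan is to rearrange the inequality by grouping terms and reducing everything to a single pairwise comparison that directly uses the monotonicity hypothesis $r_i/w_i \geq r_j/w_j$ for $i<j$. Let me set $S = \sum_{i<k} r_i$, $W = \sum_{i<k} w_i$, so $\sum_{j\geq k} w_j = w - W$. The LHS is $S - \tfrac{1}{w}\sum_{i<k} r_i \sum_{j<i} w_j$, and the third term of the RHS becomes $\frac{S(w-W)}{w} = S - \frac{SW}{w}$.

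After canceling the common $S$ and multiplying through by $-2w$, the claimed inequality is equivalent to
\[
2\sum_{i<k}\sum_{j<i} r_i w_j \;+\; \sum_{i<k} r_i w_i \;\leq\; S W.
\]
First I would expand the product $SW = \bigl(\sum_{i<k} r_i\bigr)\bigl(\sum_{j<k} w_j\bigr)$ as a double sum and split it into the diagonal $i=j$ and the two off-diagonal parts $i<j$ and $j<i$; the diagonal term $\sum_{i<k} r_i w_i$ cancels. What remains is the inequality
\[
\sum_{j<i<k} r_i w_j \;\leq\; \sum_{i<j<k} r_i w_j.
\]

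Renaming indices on the right-hand side (swap $i$ and $j$), this is $\sum_{j<i<k}(r_j w_i - r_i w_j) \geq 0$, which holds term-by-term precisely because $r_i/w_i$ is non-increasing: for $j<i$ we have $r_j/w_j \geq r_i/w_i$, i.e.\ $r_j w_i \geq r_i w_j$ (assuming $w_\cdot > 0$; zero weights can be handled by dropping those indices from the ratio constraint or by a limit argument). The main obstacle is really just the bookkeeping in the first step — getting all three RHS contributions into a single uniform form so that the reduction to a pairwise rearrangement is visible; no inequality beyond the sorted-ratio hypothesis is needed.
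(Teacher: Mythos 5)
Your proof is correct, but it takes a genuinely different route from the paper's. After the (correct) bookkeeping reduction, your argument rests on the purely algebraic rearrangement inequality
\[
\sum_{j<i<k} r_i w_j \;\leq\; \sum_{i<j<k} r_i w_j,
\]
which holds term-by-term from $r_j w_i \geq r_i w_j$ for $j<i$ — a Chebyshev-sum-type argument. The paper instead argues geometrically: it interprets the LHS as $\tfrac1w$ times the area under the concave staircase of cumulative points $\left(\sum_{j<i} w_j, \sum_{j\leq i} r_j\right)$, notes that the sorted-ratio hypothesis puts this staircase above the chord $y\left(\sum_{j<k} w_j\right) = x\left(\sum_{j<k} r_j\right)$, and lower-bounds the area by the big triangle under the chord, the $k-1$ small step triangles, and the tail rectangle — which are exactly the three RHS terms. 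The paper's picture thus explains \emph{where} the three RHS terms come from, which is useful since the decomposition is reused in the rounding analysis of Theorem~\ref{unitmab}; your algebraic route buys a mechanically checkable verification that isolates the single pairwise inequality doing all the work, and it incidentally confirms that the two index typos in the statement ($\sum_{j<i} w_i$ for $\sum_{j<i} w_j$ and $\sum_{j\geq k} w_k$ for $\sum_{j\geq k} w_j$) must be read the natural way. Your side remark on zero weights is harmless but unnecessary: the hypothesis that $r_i/w_i$ is a well-defined sequence already presupposes $w_i > 0$ (and hence $w>0$, needed when you multiply through by $-2w$ and flip the inequality, which you did correctly).
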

\begin{proof}
The LHS is $\frac1w$ of the area under the curve defined by the chain
of points $(0,r_1)$,$(w_1,r_1)$, $(w_1,r_1+r_2)$, $\ldots$,
$(\sum_{j<i} w_j$, $\sum_{j\leq i} r_j)$,$(\sum_{j\leq i} w_j,
\sum_{j\leq i} r_j)$,$\ldots$,$(\sum_{j<k-1} w_j, \sum_{j<k} r_j)$,
and a horizontal line segment broken into two parts $(\sum_{j<k} w_j,
\sum_{j<k} r_j)$,$(w,\sum_{j<k} r_j)$ for convenience.

The non-decreasing ordering implies that all the points in the curve 
(except the last one) are above the line $y (\sum_{j<k} w_j) = x (\sum_{j<k} r_j)$.

Therefore the area defined in the LHS dominates the sum of the areas
defined by (i) the triangle defined by $(0,0),(\sum_{j<k}
w_j,\sum_{j<k} r_j),(0,\sum_{j<k} r_j)$, (ii) the $k-1$ small
triangles defined by $(\sum_{j<i} w_j$, $\sum_{j<i} r_j)$,
$(\sum_{j<i} w_j$, $\sum_{j\leq i} r_j)$, and $(\sum_{j\leq i} w_j$, $\sum_{j\leq i} r_j)$. and (iii) The rectangle of length $\sum_{j \geq k} w_j$ and height $\sum_{j <k} r_j$.

These are the three terms on the RHS and thus the lemma follows.
\end{proof}

\begin{theorem}
\label{unitmab}
In time poly$(T,\sum_i|\S_i|)$, we can compute a $2$-approximation to the finite horizon Bayesian MAB  problem with instantaneous feedback (with budgets and arbitrary priors).
\end{theorem}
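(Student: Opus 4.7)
The plan is to set $r_i = R(\P_i)$ and $w_i = N(\P_i)$, with $\sum_i w_i \le T$ and $\finlp \le \sum_i r_i$ from the LP1 construction. Since the projection of any global policy onto each arm yields a feasible collection of single-arm policies whose expected plays sum to at most $T$, $\finlp$ upper bounds the true optimum, and it suffices to show that the algorithm's expected reward $\mathrm{ALG}$ satisfies $\mathrm{ALG} \ge \frac{1}{2} \sum_i r_i$. The algorithm orders arms so that $r_1/w_1 \ge r_2/w_2 \ge \cdots$, which matches the hypothesis of Lemma~\ref{concavechain}.

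The first step is to fix arm $i$ and condition on $X_i$, the random number of plays consumed by $\P_1,\ldots,\P_{i-1}$ when the algorithm reaches arm $i$. The algorithm then runs $\P_i$ with at most $T - X_i$ plays available. Since $\P_i$ has horizon $T$, every decision path of $\P_i$ uses at most $T$ plays, and a direct check shows $\min(T-X_i, b) \ge \frac{(T-X_i)^+}{T}\, b$ for every $b \le T$. Therefore the algorithm's execution of $\P_i$ is a valid path truncation in the sense of Theorem~\ref{statetheorem} with $\beta = (T-X_i)^+/T$, and conditional on $X_i$ the expected reward from $\P_i$ is at least $\frac{(T-X_i)^+}{T}\, r_i$.

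Next I would take expectations over $X_i$. Since $x \mapsto (1-x/T)^+$ is convex, Jensen's inequality yields $\E[(1-X_i/T)^+] \ge (1-\E[X_i]/T)^+$. The algorithm's execution of each earlier $\P_j$ is itself a truncation of $\P_j$, so linearity gives $\E[X_i] \le \sum_{j<i} w_j \le T$, and the positive-part superscript drops away. Summing over $i$,
\[
\mathrm{ALG} \;\ge\; \sum_i r_i \left(1 - \frac{\sum_{j<i} w_j}{T}\right) \;\ge\; \sum_i r_i \left(1 - \frac{\sum_{j<i} w_j}{W}\right),
\]
where $W = \sum_j w_j \le T$ and the second inequality is $1-s/T \ge 1-s/W$ for $T \ge W$. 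Applying Lemma~\ref{concavechain} with $w=W$ and $k=n+1$ (so the rectangle term vanishes) gives $\mathrm{ALG} \ge \frac{1}{2}\sum_i r_i + \frac{1}{2W}\sum_i r_i w_i \ge \frac{1}{2}\finlp$, which is the claimed $2$-approximation. The polynomial running time follows from solving LP1 via the standard per-arm finite-horizon dynamic program.

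The subtlest point is the second step: one must verify that the algorithm's execution of $\P_i$ -- under the random residual budget $T - X_i$ that depends on the earlier arms -- really is a path-truncation to which Theorem~\ref{statetheorem} applies, and then combine this with Jensen's inequality in the correct direction to push the randomness of $X_i$ all the way to its mean. Once these are in hand, Lemma~\ref{concavechain} finishes the argument cleanly, and because the analysis never inspects the posterior spaces $\S_i$, arbitrary priors and budgets are handled uniformly.
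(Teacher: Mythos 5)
Your proposal is correct and follows essentially the same route as the paper's proof: order arms by $R(\P_i)/N(\P_i)$, apply the Truncation Theorem~\ref{statetheorem} to arm $i$ with $\beta$ determined by the plays consumed by earlier (independent) arms, pass to expectations, and finish with Lemma~\ref{concavechain} with $k=n+1$. Your treatment is in fact slightly more careful than the paper's at a few points -- verifying the pathwise hypothesis $\min(T-X_i,b)\ge \frac{(T-X_i)^+}{T}b$ of the Truncation Theorem, handling the positive part (the paper simply drops the $\min$ by linearity, which also suffices), and justifying the $1/T$ versus $1/W$ denominator swap -- but these are refinements of, not departures from, the paper's argument.
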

\begin{proof}
We show that the policy described above has an expected reward of $\frac12\finlp$.
Let the number of plays of arm $i$ be $T_i$. We know $\E[T_i] =
N(\P_i)$ and $\sum_i N(\P_i) = T$. We start playing arm $i$ after
$\sum_{j < i} T_j$ plays (if the sum is less than $T$); we apply the
Truncation Theorem~\ref{statetheorem} and the expected reward of
$\P_i$ continuing from this point onward is $ \left (1 - \frac{1}{T}
\min \{ T,\sum_{j<i} T_j \} \right)R(\P_i)$. Note that this is a
consequence of the independence of arm $i$ from $\sum_{j<i} T_j$.
Thus the total expected reward is
{\small
\begin{eqnarray*} 
R  & = & \E \left[ \sum_i \left(1 - \frac{1}{T} \min \{ T, \sum_{j<i} T_j\} \right)R(\P_i) \right] \geq  
\E \left[ \sum_i \left(1 - \frac{1}{T}\sum_{j<i} T_j \right)R(\P_i) \right] \\
& = & 
\sum_i \left (1 - \frac{1}{T}\sum_{j<i} \E[T_j] \right)R(\P_i) = \sum_i R(\P_i)\left( 1 -\frac{\sum_{j<i} N(\P_j)}{T} \right) 
\end{eqnarray*}
}
By Lemma~\ref{concavechain} (with $r_i =R(\P_i)$, 
$w_i=N(\P_i)$ and $k=n+1$) this is at least $\frac12 \sum_j
R(\P_j)=\frac12\finlp$.
\end{proof}

\paragraph{Tight example of the analysis.}
We show that the gap of the optimum policy and LP1 is a
factor of $2 - O(\frac1n)$, even for unit length plays. Consider the
following situation: We have two ``types'' of arms. The type I arm
gives a reward $0$ with probability $a = 1/n$ and $1$ otherwise. The type
II arm always gives a reward $0$.  We have $n$ independent arms. Each
has an identical prior distribution of being type I with probability
$p = 1/n^2$ and type II otherwise. Set $T = n$.

Consider the symmetric LP solution that allocates one play to each arm; if it observed a $1$, it plays the arm for $n$ steps. The expected number of plays made is $n + O(1/n)$, and the expected reward is $n \times 1/n = 1$. Therefore,  $LP1 \ge 1 - O(1/n)$.

Consider the optimum policy. We first observe that if the policy ever
sees a reward $1$ then the optimum policy has found one of the type II
arms, and the policy will continue to play this arm for the rest of
the time horizon. At any point of time before the time horizon, since
$T=n$, there is always at least one arm which has not been played
yet. Suppose we play an arm and observe the reward $0$, then the
posterior probability of this arm being type II increased. So the
optimum policy should not prefer this currently played arm over an
unplayed arm. Thus the optimum policy would be to order the arms
arbitrarily and make a single play on every new arm. If the outcome is
$0$, the policy quits, otherwise the policy keeps playing the arm for
the rest of the horizon. The reward of the optimum policy can thus be
bounded by $\sum_{x=0}^{T-1} ap(1 + (T-x-1)a) = pa^2T(T+1)/2 +
(1-a)/n= \frac12 + O(\frac1n)$. Thus the gap is a factor of $2 -
O(\frac1n)$. 

\section{Multi-armed Bandits with Delayed Feedback}
\label{part2}
 Define $Q_i = T/\delta_i$. We assume that $\delta_i = o(T/\log T)$, which implies that the horizon is slightly separated from the delays.   We show the following theorem. 
 
\begin{theorem}
\label{thmone}
Assuming $\delta_i = o(T/\log T)$, there is a constant factor approximation to the Bayesian MAB problem with delayed feedback. The running time for computing this policy is $\mbox{poly}(T,\sum_i|\S_i|)$. 
\end{theorem}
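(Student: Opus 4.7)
The plan is to lift the LP-rounding template of Section~\ref{finite} to the delayed setting. Three obstacles arise: (i) a single-arm policy's state encodes up to $2^{\delta_i}$ possible sets of outstanding plays, so the per-arm state space is exponential in the input; (ii) such a policy may interleave real plays with idle ``wait-for-feedback'' steps, so its local horizon is not proportional to its number of plays; and (iii) the global policy still has only one play per step shared across all $n$ arms.

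First I would set up the natural analog of LP1,
\[ LP_2 \ =\ \max_{\P_s(i,T)}\Big\{\sum_i R(\P_s(i,T)) \ \Big|\ \sum_i N(\P_s(i,T)) \le T \Big\}, \]
which upper-bounds OPT since the projection of any global policy onto an arm is a single-arm policy. I would then invoke Theorem~\ref{statetheorem} to shorten each decision path of $\P_s(i,T)$ to length $\beta T$, paying at most a factor $\beta$ in reward and no increase in plays.

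The technical heart is the \emph{compaction} step. Partition the (truncated) local horizon of arm $i$ into $\Theta(Q_i)$ phases of length $\Theta(\delta_i)$, and replace the original single-arm policy by a new one $\hat{\P}_i$ that, at the start of each phase, decides based only on outcomes that have completed by then how many plays to issue in the phase, and issues them back-to-back. Since outcomes from plays within a phase do not arrive until the next phase anyway, the original policy's within-phase adaptivity was already limited, and Claim~\ref{acctpath} shows that expected reward depends only on the number of plays along each decision path, not on their placement; hence $R(\hat{\P}_i)$ matches $R(\P_i)$ up to a constant. Crucially, $\hat{\P}_i$ now has a clean phase structure: its per-arm state collapses to something polynomial in $|\S_i|$ and $Q_i$, so the LP can be solved by per-arm dynamic programming in $\mathrm{poly}(T,\sum_i|\S_i|)$ time.

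Finally I would combine the compacted $\hat{\P}_i$ into a feasible global policy by a priority-based schedule: sort arms by decreasing $R(\hat{\P}_i)/N(\hat{\P}_i)$ and at each global step play the highest-priority arm that presently wants a play, deferring to lower-priority arms during its internal waits. The assumption $\delta_i = o(T/\log T)$ provides enough slack to absorb the cross-arm interleavings and the unavoidable idle slots within horizon $T$, and a weighted rearrangement argument in the style of Lemma~\ref{concavechain} and Theorem~\ref{unitmab} then recovers a constant fraction of $LP_2$. I expect the main obstacle to be the compaction step: one must argue simultaneously that (a) the compacted policy's coarser decision granularity loses at most a constant factor of reward, relying on Claim~\ref{acctpath} and on the observation that within-phase outcomes are delayed anyway; (b) the phase-based state collapses to polynomial size; and (c) the final schedule still respects the true $\delta_i$-delay in feedback availability.
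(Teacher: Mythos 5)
Your proposal correctly reuses the LP bound over single-arm policies and correctly identifies block restructuring (your ``phase'' construction is essentially the paper's Lemma~\ref{lemone}, and it does collapse the per-arm state space to polynomial size, enabling the poly-time LP). But there is a genuine gap, and it sits exactly at the step the paper flags as the main difficulty. You invoke Theorem~\ref{statetheorem} to shorten each single-arm policy to horizon $\beta T$ before (and independently of) your compaction. The hypothesis of that theorem is that the truncated policy retains at least a $\beta$ fraction of the \emph{plays} on every decision path. With instantaneous feedback, plays and time steps coincide, so cutting the horizon cuts plays proportionally; with delays they do not. A block-structured policy can make one play per block for $\Omega(Q_i)$ blocks---consuming almost the entire wall-clock horizon while making very few plays---and then unleash nearly all of its plays in the last few blocks once the posterior is favorable. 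Time-truncating such a policy at $\beta T$ removes essentially all of its plays, hence essentially all of its reward, and Claim~\ref{acctpath} does not rescue you: it redistributes reward over the number of plays per path, which is precisely the quantity the time-truncation destroys. Your phase restructuring does not repair this, because it only regularizes where \emph{within} a block plays occur, not how plays are distributed \emph{across} the horizon.

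The missing ingredients are the paper's Lemmas~\ref{lemtwo} and~\ref{lemthree}: (a) convert the policy to $c$-delay-free form, so that once some block contains $c\delta_i$ plays the policy thereafter plays every step, using feedback from $\delta_i$ plays ago; and (b) a simulation/compaction argument that makes extra plays \emph{early}, withholds their outcomes (legitimate by the martingale/i.i.d.\ property), and uses them to eliminate $1/\alpha$ later sparse blocks per inflated block, grouping blocks into size classes---at the cost of a constant-factor increase in expected plays. Only after this is every decision path guaranteed to spend at most $(\alpha+o(1))Q_i$ blocks (at most $T/4$ time) before playing contiguously, so that truncation at $T/2$ provably keeps at least a $1/8$ fraction of plays on every path (Lemma~\ref{lemfour}), and Theorem~\ref{statetheorem} finally applies. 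This is also where the hypothesis $\delta_i = o(T/\log T)$ is actually used: the compaction leaves one stray block per size class, i.e.\ $O(\log \delta_i)$ leftover blocks, and one needs $\log \delta_i = o(Q_i)$---not, as you suggest, to absorb cross-arm interleaving slack. A smaller but real concern is your combining step: sorting by $R(\hat{\P}_i)/N(\hat{\P}_i)$ and charging via a Lemma~\ref{concavechain}-style rearrangement does not obviously extend to interleaved executions with waits (a preemption of an active arm shifts its entire remaining schedule, waits included, past the horizon). The paper instead lets each arm participate independently with probability $1/4$ under an \emph{arbitrary} priority order and argues via Markov's inequality that with probability $1/8$ an arm's entire $T/2$-step policy runs with no interference at all (Lemma~\ref{acct}), which sidesteps these correlation issues entirely.
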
 

As in Section~\ref{finite}, we will use (LP1) to bound of the reward of the best collection of single-arm policies. However, it is not clear how to solve the above LP in polynomial time, since each $\Pp_s(i,T)$ has description which is exponential in the delay parameter $\delta_i$. We first simplify the structure of the single-arm policies to enable poly-time computation (Step 1). Even after this, it could happen that the policies make most of their plays after $T/2$ steps, so that the truncation theorem cannot be applied directly. We then show how to compact the policies (Step 2) to reduce the horizon - this compaction uses an interesting simulation argument. We then truncate the policies, and solve the LP relaxation over these well-structured policies (Step 3). We finally design a scheduling algorithm for combining these single-arm policies (Step 4) and analyze the approximation ratio.

\subsection{Step 1: Block Structured Policies}
\begin{definition}
A single-arm policy is said to be {\em Block Structured} if the policy executes in phases of size $(2\delta_{i}+1)$. At the start of each
phase (or block), the policy makes at most $\delta_{i}+1$ consecutive
plays. The policy then waits for the rest of the block in order to
obtain feedback on these plays, and then moves to the next block. A
block is defined to be {\em full} if exactly $\delta_i+1$ plays are
made in it.
\end{definition}

We first show  that all single-arm policies can be replaced with block structured policies while violating the time horizon by a
constant factor. The idea behind this proof is simple -- we simply insert delays of length $\delta_i$ after every chunk of plays of length $\delta_i$.  

\begin{lemma}
\label{lemone}
Any policy $\Pp(i,T)$ can be converted it to a Block Structured policy  $\Pp'(i,2T)$ such that $R(\Pp(i,T)) \leq R(\Pp'(i,2T))$  and $N(\Pp'(i,2T)) \leq N(\Pp(i,T)) $
\end{lemma}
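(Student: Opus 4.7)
The plan is to construct $\Pp'$ by a block-wise simulation of $\Pp$. I would partition the $T$ steps of $\Pp$ into consecutive windows $W_1,W_2,\ldots$ of length $\delta_i+1$ (the last possibly shorter); there are at most $\lceil T/(\delta_i+1)\rceil$ such windows, so the total new horizon $\lceil T/(\delta_i+1)\rceil(2\delta_i+1)$ fits within $2T$. For each $k$, the block $B_k$ of $\Pp'$ is reserved for simulating window $W_k$: at the very start of $B_k$ the new policy issues exactly the (at most $\delta_i+1$) plays that $\Pp$ chose to make during $W_k$, back-to-back, and then sits idle for the remaining $\delta_i$ steps so that all feedback from those plays has arrived by the next block boundary, as the Block Structured definition demands.

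The crux is to show that this batching is realizable, i.e.\ that at the start of $B_k$ the policy $\Pp'$ already has enough information to reproduce the (random) set of plays $\Pp$ makes in $W_k$. Any play that $\Pp$ issues at a time $t\in W_k$ can depend only on feedback from plays made at times $\le t-\delta_i$; since $|W_k|=\delta_i+1$ and the delay is $\delta_i$, every such earlier play lies in $W_1\cup\cdots\cup W_{k-1}$. On the $\Pp'$ side the trailing $\delta_i$ idle steps of each block guarantee that all outcomes of plays issued in $B_{k-1}$ are known by the end of $B_{k-1}$, so inductively at the start of $B_k$ the policy $\Pp'$ holds exactly the outcomes of the plays it has issued in $B_1,\ldots,B_{k-1}$, which by construction are distributed identically to the outcomes $\Pp$ has collected from $W_1,\ldots,W_{k-1}$. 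Thus $\Pp'$ can internally unroll the sub-policy of $\Pp$ restricted to $W_k$, revealing to each simulated time step only the subset of feedback that had actually arrived by that step in the original execution, and commit the resulting plays to $B_k$.

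For the inequalities on $R$ and $N$, the simulation preserves the joint distribution of plays and their outcomes: $\Pp'$ issues, in distribution, the same bag of plays in block $B_k$ as $\Pp$ does in window $W_k$, and each play draws its reward i.i.d.\ from the same underlying $D_i$. Hence $N(\Pp'(i,2T))=N(\Pp(i,T))$, and using the accounting of Claim~\ref{acctpath} (per-play reward is $\mu$ conditioned on the true mean) one obtains $R(\Pp'(i,2T))=R(\Pp(i,T))$; budgets, being folded into the state as a reward cut-off, are hit on exactly the same paths by both policies. The main delicate point I expect is the off-by-one bookkeeping at window boundaries, namely whether ``available only after $\delta_i$ steps'' means time $t+\delta_i$ or $t+\delta_i+1$; this is resolved cleanly by taking window length $\delta_i+1$, so that any putative same-window feedback could only come from a play that $\Pp'$ itself has already committed to at the start of $B_k$, and is therefore still consistent with the step-by-step simulation of $\Pp$ inside $W_k$.
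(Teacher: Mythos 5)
Your proposal is correct and is essentially the paper's own proof: the paper likewise partitions the horizon into windows of $\delta_i+1$ steps (within which, under the paper's stated convention, ``the result of any play in these steps is not known before all these plays are made''), batches each window's plays at the front of a block of length $2\delta_i+1$ followed by a $\delta_i$-step wait, and simulates $\Pp$ inductively by revealing only the feedback it would have had at each original step, yielding the same horizon blowup $\frac{2\delta_i+1}{\delta_i+1}<2$ and the same distribution-preserving coupling for $R$ and $N$. One caveat on your closing off-by-one remark: under the alternative convention you entertain (feedback usable at time $t+\delta_i$), the patch does not actually work --- the last-step play of $W_k$ could adapt to the first-step outcome of $W_k$, which the compressed plays at block-steps $1,2,\ldots$ would not yet possess when the window's plays are not back-to-back --- but since the paper explicitly adopts the convention under which windows of length $\delta_i+1$ contain no usable intra-window feedback, your main argument stands as is.
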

\begin{proof}
  We can assume that the policy makes a play at the very first time
  step, because we can eliminate any wait without any change of
  behavior of the policy.  

Consider the actions of $\Pp(i,T)$ for the
  first $\delta_{i}+1$ steps, the result of any play in
  these steps is not known before all these plays are
  made. An equivalent policy $\Pp'$ simulates $\Pp$ for the first
  $\delta_{i}+1$ steps, and then waits for $\delta_{i}$ steps, for a
  total of $2\delta_{i}+1$ steps.  This ensures that $\Pp'$ knows the
  outcome of the plays before the next block begins.

Now consider the steps from $\delta_{i}+2$ to $2\delta_{i}+3$ of
  $\Pp(i,T)$. As $\Pp$ is executed, it makes some plays
  possibly in an adaptive fashion based on the outcome of the plays in
  the previous $\delta_{i} + 1$ steps, but not on the current
  $\delta_{i}+1$ steps.  $\Pp'$ however knows the outcome of the
  previous plays, and can simulate $\Pp$ for these $\delta_{i}+1$
  steps and then wait for $\delta_{i}$ steps again.  It is immediate
  to observe that $\Pp'$ can simulate $\Pp$ at the cost of increasing
  the horizon by a factor of $\frac{2\delta_{i}+1}{\delta_{i}+1} < 2$.
  Observe that in each block of $2\delta_{i}+1$, $\Pp'$ can
  make all the plays consecutively at the start of the block without
  any change in behavior. The budgets are also respected in this process. 
  This proves the lemma.
\end{proof}

\subsection{Step 2: Well-Structured Policies and Simulation}
\label{appc}
The block-structured policies constructed above still suffer from the drawback that too many plays can be made close to the horizon $T$. Initially, the policy can be conservative and play very few times in each block waiting for feedback. In this part, we show how to compact such policies while preserving the reward and making sure the expected number of plays only increases by a constant factor. Our technique uses the idea of {\em simulation} - we make more plays initially, but hold on to the outcomes of the extra plays. When the original policy makes plays in a subsequent block, we eliminate these plays and instead use the outcomes of the saved up plays. As with the Truncation theorem, this argument crucially uses the martingale property that the reward of the arm are $i.i.d.$ draws from the same unknown underlying distribution regardless of when the plays are made. 

\begin{definition}
Define a block-structured policy to be {\em $c$-delay-free} for $c \le 1$ if the first time the policy encounters a block with at least $c \delta_i $ plays, it plays every step (without waiting) beyond this point (using feedback from $\delta_i$ plays ago) until it stops executing.
\end{definition}

Intuitively, we are eliminating delays if a policy makes a sufficiently large number of plays in a block. This step not only compacts the policy, but also shrinks the state space significantly, since if the policy executes in a delay-free fashion, we can pretend the feedback is instantaneous. (In reality, the feedback is from $\delta_i$ plays ago, but we can easily couple the two executions.) We now show that any block-structured policy can be made $c$-delay-free.

\begin{lemma}
\label{lemtwo} 
Given any Block Structured policy $\Pp(i,2T)$ we can  construct a $c$-delay-free Block-structured policy $\Pp'(i,2T)$, such that  $R(\Pp(i,2T)) \leq R(\Pp'(i,2T))$ and $N(\Pp'(i,2T)) \leq (1+ \frac{1}{c}) N(\Pp(i,2T))$.
\end{lemma}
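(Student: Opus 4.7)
My plan is a direct simulation argument that preserves reward through the martingale accounting of Claim~\ref{acctpath} and charges the extra plays of $\Pp'$ against the $\ge c\delta_i$ plays guaranteed by the first heavy block.

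Construction. While no block of $\Pp$ has yet contained at least $c\delta_i$ plays, let $\Pp'$ be literally a block-by-block copy of $\Pp$: same tree, same plays, same inter-block waits. Let $k^\ast$ denote the random index of the first block in which $\Pp$ issues $\ge c\delta_i$ plays. From the first step of block $k^\ast$ onward, $\Pp'$ switches into delay-free mode and plays on every time step, using its own outcome from $\delta_i$ plays earlier to make each decision. In parallel $\Pp'$ carries a virtual execution of $\Pp$: whenever the virtual $\Pp$ is scheduled to issue its $j$-th play after $t_{k^\ast}$, it is handed the outcome of $\Pp'$'s $j$-th delay-free play. Because both $\Pp$ and $\Pp'$ draw their plays i.i.d.\ from the same $D_i$, this coupling is distributionally sound; outcomes not yet needed by the virtual $\Pp$ are buffered. $\Pp'$ halts precisely when the virtual $\Pp$ would halt.

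Reward. Condition on the true mean $\mu$. Claim~\ref{acctpath} gives $R(\Pp(\mu)) = \mu\, N(\Pp(\mu))$ and $R(\Pp'(\mu)) = \mu\, N(\Pp'(\mu))$, where $N(\cdot)$ counts plays up to the point the budget is exhausted (which has already been folded into the state space). In the coupled execution every play of the virtual $\Pp$ is realized by a play of $\Pp'$, so pathwise $N(\Pp'(\mu)) \ge N(\Pp(\mu))$, and this inequality is preserved after truncation at the budget. Integrating against the prior $f(\mu)$ yields $R(\Pp') \ge R(\Pp)$.

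Play count. Phase~1 contributes equally to $N(\Pp)$ and $N(\Pp')$, so the surplus lives entirely in the delay-free phase. Let $L^{cp}$ be the number of time steps $\Pp'$ spends in delay-free mode and $N^{cp}$ the number of plays the virtual $\Pp$ issues in the same span; then $N(\Pp') - N(\Pp) = L^{cp} - N^{cp}$. Block $k^\ast$ costs $2\delta_i + 1$ steps while purchasing at least $c\delta_i$ plays, and in each later block the block structure caps $\Pp$'s plays at $\delta_i + 1$ inside a $2\delta_i + 1$ window. Crucially, $\Pp'$ halts exactly when the virtual $\Pp$ halts, so only blocks in which $\Pp$ is still actively deciding contribute to $L^{cp}$. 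Amortizing the per-block surplus of these active blocks against the $\ge c\delta_i$ baseline supplied by $k^\ast$ (and against any later heavy blocks that $\Pp$ encounters) delivers $L^{cp} \le (1 + 1/c)\, N^{cp}$, hence $N(\Pp') \le (1 + 1/c)\, N(\Pp)$.

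The delicate step is this final amortization. A pathwise per-block bound on $L^{cp} - N^{cp}$ can fail badly: a post-$k^\ast$ block in which $\Pp$ makes a single play and then waits for delayed feedback has per-block surplus as large as $2\delta_i$, far worse than any constant. What rescues the analysis is the hypothesis $n_{k^\ast} \ge c\delta_i$ itself: the guaranteed $c\delta_i$ plays purchased at $k^\ast$ act as a per-path budget against which the surpluses of subsequent active blocks are charged, and the halting rule guarantees that no wait-heavy tail of $\Pp$ is paid for by $\Pp'$. This amortization is the only place where the threshold $c$ enters quantitatively, and it is exactly what produces the constant $1 + 1/c$.
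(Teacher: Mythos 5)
Your construction and your reward argument coincide with the paper's: copy $\Pp$ until the first block with at least $c\delta_i$ plays, then stream plays continuously while a virtual copy of $\Pp$ consumes the outcomes with a lag of $\delta_i$ plays, and the play-by-play coupling (every play being an i.i.d.\ draw from the same $D_i$, so the accounting of Claim~\ref{acctpath} applies) gives $R(\Pp') \geq R(\Pp)$. The genuine gap is in your play-count step. Your amortization is pitched at the wrong quantity and, as stated, does not deliver the bound: you implicitly let $\Pp'$ pay per block of the virtual $\Pp$'s wall-clock schedule (so that a post-$k^{\ast}$ block with a single play carries surplus about $2\delta_i$, as you yourself observe), and you then propose to charge all such surpluses against the one-time baseline of $c\delta_i$ plays purchased at $k^{\ast}$. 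But nothing prevents the virtual $\Pp$ from executing $K$ light active blocks after $k^{\ast}$, each with one play and no later heavy block; under your accounting the total surplus is roughly $2\delta_i K$ against a baseline of $c\delta_i + K$ plays, so the ratio grows to $\Theta(\delta_i)$ as $K$ grows, not $1 + 1/c$. The sentence asserting that the $c\delta_i$ baseline ``acts as a per-path budget'' is precisely the claim that needs proof, and with per-block surpluses it is false.

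What closes the gap --- and what the paper's proof does --- is to note that the simulation decouples $\Pp'$ completely from the virtual $\Pp$'s wall clock: $\Pp'$ streams plays $1,2,3,\ldots$; the outcome of stream play $j$ is available at stream play $j+\delta_i$; the virtual $\Pp$'s $j$-th post-trigger play is identified with stream play $j$; and all of the virtual $\Pp$'s inter-block waits are collapsed, since the outcomes it needs are handed over in stream order rather than in real time. Hence intermediate blocks contribute \emph{zero} surplus, light or not; the only unmatched plays of $\Pp'$ are the at most $\delta_i$ in-flight plays at the very end, made while awaiting the outcomes that resolve the virtual $\Pp$'s final decision (whether to stop). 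This yields the pathwise bound $N(\Pp') \leq N(\Pp) + \delta_i$ on every decision path where the delay-free mode triggers, while on untriggered paths the two policies are identical. Since the triggering block alone already contains at least $c\delta_i$ plays of $\Pp$ on that path, $N(\Pp') \leq \left(1 + \frac{1}{c}\right) N(\Pp)$ holds pathwise and therefore in expectation. No amortization over blocks is needed; replacing your final paragraph with this one-time additive pipeline bound turns your (otherwise correct) construction into a complete proof.
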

\begin{proof}
Consider the first time the policy encounters a block with $c \delta_i$ plays on some decision path. We play the arm continuously beyond this point and {\em simulate} the behavior of the original policy $\Pp$. For each contiguous play the policy $\Pp'$ makes, this outcome is available after $\delta_i$ steps. Simulate the next play of $\Pp$ using this outcome, until $\Pp$ stops execution. Clearly,  $\Pp'$ makes at most $\delta_i$ additional plays than $\Pp$ on any decision path; since $\Pp$ made at least $c \delta_i$ plays, this shows that  $N(\Pp'(i,2T)) \leq (1+ \frac{1}{c}) N(\Pp(i,2T))$. Since the execution of $\Pp'$ is coupled play-by-play with the execution of $\Pp$, it is clear that $R(\Pp(i,2T)) \leq R(\Pp'(i,2T))$.
\end{proof}

The policies constructed above still suffer from being too sparse if the number of plays in each block is less than $c \delta_i$. In this step, we will compact these policies further to retain mostly the full blocks. Note that any policy uses at most $Q_i = \frac{T}{\delta_i} = \omega(\log \delta_i)$ blocks. 

\begin{definition}
For constant $\alpha < 1$, define a $c$-delay-free block-structured single-arm policy $\P$ to be $(\alpha,c)$-{\em  well-structured} if after encountering at most $q = (\alpha + o(1)) Q_i$ blocks, the  policy switches to playing continuously, {\em i.e.}, executing in delay-free mode.
\end{definition}

The next lemma further compacts the block-structured policies into well-structured policies. 

\begin{lemma}
\label{lemthree}
For any $\alpha < 1$ and $c \le \frac{\alpha}{\alpha + 2}$, given a $c$-delay-free policy $\Pp(i,2T)$, there is a $(\alpha,c)$-well-structured policy $\Pp'$ such that $R(\Pp(i,2T)) \leq  R(\Pp'(i,2T))$ and $N(\Pp'(i,2T)) \leq (1+\frac{2}{\alpha})  N(\Pp(i,2T))$.
\end{lemma}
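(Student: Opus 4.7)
The plan is to define $\Pp'$ by letting it mimic $\Pp$ block-by-block for up to $q = (\alpha + o(1)) Q_i$ blocks, and, if $\Pp$ has not already triggered its own $c$-delay-free transition inside that window, forcing $\Pp'$ to switch to delay-free mode at the end of block $q$. The forced switch is the same simulation gadget used in the proof of Lemma~\ref{lemtwo}: $\Pp'$ pays $\delta_i$ back-to-back ``startup'' plays to prime the feedback pipeline and then plays once per step, using the outcome of its $(k-\delta_i)$-th pull to drive the $k$-th subsequent simulated decision of $\Pp$. On paths where $\Pp$ had already become $c$-delay-free inside the first $q$ blocks, or where $\Pp$ stopped before block $q$, we set $\Pp' = \Pp$. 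By construction $\Pp'$ is delay-free by block $q$ on every path it survives, so it is $(\alpha, c)$-well-structured.

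The reward inequality $R(\Pp') \geq R(\Pp)$ follows from the same i.i.d./martingale coupling that underlies Theorem~\ref{statetheorem} and Lemma~\ref{lemtwo}. Each pull of arm $i$ returns an independent sample from the unknown underlying $D_i$ regardless of when the pull is made, so the outcomes $\Pp'$ supplies to its simulated copy of $\Pp$ have the same joint distribution as the outcomes $\Pp$ would itself have observed. Consequently the induced distribution over posteriors, actions, and accrued reward is preserved, and $\Pp'$ realizes at least the reward of $\Pp$ in expectation.

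For the number of plays, on any decision path $N(\Pp')-N(\Pp) \in \{0,\delta_i\}$, with the $\delta_i$ extra charged only on paths where the forced transition fires, i.e.\ on paths on which $\Pp$ is still alive and strictly block-structured at the end of block $q$. To turn this into the multiplicative bound I plan to show that on any such path one has $N(\Pp) \geq \alpha\delta_i/2$; combined with $N(\Pp') \leq N(\Pp) + \delta_i$ pathwise this yields $N(\Pp') \leq (1 + \delta_i/N(\Pp)) N(\Pp) \leq (1 + 2/\alpha) N(\Pp)$. The condition $c \leq \alpha/(\alpha+2)$ is used at exactly this point: it rearranges to $1 + 2/\alpha \geq 1/c$, which balances the per-block ceiling ($< c\delta_i$ plays per active block) against the $q$ traversed blocks so that the resulting count crosses the $\alpha\delta_i/2$ threshold.

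The main obstacle is precisely this per-path lower bound on $N(\Pp)$: since a block-structured policy may insert waits inside a block, it can in principle survive past block $q$ while making arbitrarily few plays. My plan is to first preprocess $\Pp$ using Theorem~\ref{statetheorem} to absorb these sparse paths without reward loss --- because the Truncation Theorem distributes reward uniformly along decision-path length, a path whose active plays are sparse relative to its length can be safely shortened --- so that the remaining paths on which the forced transition fires satisfy the $\alpha\delta_i/2$ threshold. Taking expectations over all paths then yields $N(\Pp'(i,2T)) \leq (1+ \tfrac{2}{\alpha}) N(\Pp(i,2T))$.
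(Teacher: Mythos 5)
Your construction (simulate $\Pp$ block-by-block for $q$ blocks, then force a delay-free switch) is genuinely different from the paper's, but it founders exactly at the obstacle you yourself flag: the pathwise bound $N(\Pp')\le N(\Pp)+\delta_i$ cannot be converted into the claimed multiplicative bound, because a $c$-delay-free block-structured policy can survive all $q=(\alpha+o(1))Q_i$ blocks while making as few as one play per block. On such a path $N(\Pp)\approx \alpha Q_i=\alpha T/\delta_i$, so your required threshold $N(\Pp)\ge \alpha\delta_i/2$ fails whenever $\delta_i \gg \sqrt{T}$ --- a regime expressly permitted by $\delta_i=o(T/\log T)$; for instance $\delta_i=T^{2/3}$ gives $N(\Pp')/N(\Pp)\approx 1+\delta_i^2/(\alpha T)\rightarrow\infty$. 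The proposed repair via Theorem~\ref{statetheorem} cannot close this gap: the Truncation Theorem shortens \emph{every} decision path at a fixed fraction of its plays and only guarantees $R(\P')\ge\beta R(\P)$, i.e.\ it necessarily sacrifices a constant fraction of the reward, whereas Lemma~\ref{lemthree} requires $R(\Pp')\ge R(\Pp)$; there is no version of it that selectively absorbs sparse paths ``without reward loss.'' Your algebraic reading of the hypothesis is also inverted: $c\le\alpha/(\alpha+2)$ rearranges to $1/c\ge 1+2/\alpha$, not $1+2/\alpha\ge 1/c$, and in the paper the condition serves an entirely different purpose.

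The paper's proof avoids any forced switch and any lower bound on $N(\Pp)$. It compacts pathwise: group blocks into size classes (class $s$ holds blocks with $[2^s,2^{s+1}]$ plays), inflate the plays in each block by the factor $\left(1+\frac{2}{\alpha}\right)$ --- the condition $c\le\frac{\alpha}{\alpha+2}$ is used precisely so that $c\delta_i\left(1+\frac{2}{\alpha}\right)\le\delta_i$, i.e.\ the inflated plays still fit inside a single block --- and use the stored outcomes of the extra plays at a block of size class $s$ to simulate, and thereby eliminate, the first $\frac{1}{\alpha}$ downstream blocks of size class at most $s$ on each decision path (legitimate by the same i.i.d./martingale coupling you invoke). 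At most one leftover block per size class survives on any path, so $k$ blocks shrink to at most $\alpha k+\frac{\log_2\delta_i}{\alpha}=(\alpha+o(1))Q_i$ blocks using $\log\delta_i=o(Q_i)$, while the play count is inflated by exactly the factor $1+\frac{2}{\alpha}$ on every path. This delivers both inequalities of the lemma simultaneously; your accounting, by contrast, decouples the $\delta_i$ startup plays from the plays $\Pp$ actually made, and no preprocessing consistent with $R(\Pp')\ge R(\Pp)$ can recouple them.
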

\begin{proof}
Consider the execution of $\Pp$. All blocks in this policy have at most $c \delta_i$ plays, unless the policy is executing in delay-free mode.  We consider the execution of the original policy, and show a coupled execution in the new policy $\Pp'$ so that if on a decision path, $\Pp$ used $k$ blocks, then the number of blocks on the same decision path in $\Pp'$ is $\alpha k + O(\log \delta_i)$. We group blocks into size classes; size class $s$ has blocks whose number of plays lies in $[2^s,2^{s+1}]$.

We couple the executions as follows: Consider the decision tree $\Pp$ top-down. Increase the plays in the root block $r$ by a factor of $(1+\frac{2}{\alpha})$; suppose there were $x$ plays originally in this block, and the size class of this block is $s$. Consider the first $\frac{1}{\alpha}$ blocks  on each decision path downstream of $r$ whose size class is at most $s$; eliminate these blocks and use the outcomes of the $\frac{2x}{\alpha}$ extra plays made at $r$ to simulate the behavior of $\Pp$ in these blocks as follows. Simply store the outcomes of the extra plays at $r$ (without updating the prior), and use these outcomes when $\Pp$ makes those plays. Since the underlying reward distribution is fixed, these plays will be stochastically identical. We note that $\Pp'$ does not really need to ``store" the outcomes - it can simply update the prior at $r$ and follow the execution of $\Pp$ as if it had not updated it. Again, since the reward process is a martingale, both these executions are stochastically identical. 

On each decision path, either we used $r$ to eliminate $\frac{1}{\alpha}$ blocks, or there are no blocks left with size class at most $s$. Now repeat this procedure on the roots of each decision sub-tree downstream of $r$. When this procedure terminates, consider any decision path, and mark the leftover blocks whose increase in plays could not eliminate exactly $\frac{1}{\alpha}$ blocks.  For any size class $s$, it is clear that there can be at one such block, else the earlier block could have eliminated the later one. Therefore, if a decision path at $k$ blocks in $\Pp$, the above procedure shrinks the number of blocks to at most $\alpha k + \frac{\log_2 \delta_i}{\alpha}$, since there are at most $\log_2 \delta$ size classes. This procedure increases the expected number of plays by at most a factor of $1+\frac{2}{\alpha}$, and we have a guarantee that each decision path makes at most $ (\alpha + o(1)) Q_i$ plays, since $\log \delta_i = o(Q_i)$.
\end{proof}

\renewcommand{\SS}{{Super}} 
\newcommand{\tp}{{\tilde{p}}} 

\subsection{Step 3: Truncation and Solving the LP Relaxation}
We now truncate the horizon of the well-structured policies losing a constant factor in the reward.

\begin{lemma}
\label{lemfour}
Given any $(1/8,1/17)$-well-structured single arm policy $\P = \Pp(i,2T)$, there is an identical policy that stops execution after $\frac{T}{2}$ times steps (denoted by $\P'$), that satisfies the following: (i) $R(\P') =  \frac{R(\P)}{8}$ and (ii) $N(\P') \leq N(\P)$.
\end{lemma}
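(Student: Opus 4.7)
The plan is to apply the Truncation Theorem (Theorem~\ref{statetheorem}) with $\beta = 1/8$ to the policy $\P'$ obtained from $\P$ by cutting off execution at time $T/2$ out of the original horizon $2T$. It then suffices to verify that on every decision path, the number of plays made by $\P'$ is at least a $1/8$ fraction of the plays made by $\P$; the bound $N(\P') \leq N(\P)$ is immediate since $\P'$ is a time-prefix of $\P$.

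First I would unpack the well-structured hypothesis to describe plays on a typical decision path. The policy $\P$ executes at most $q = (1/8 + o(1)) Q_i$ blocks before switching to delay-free mode; let $k \leq q$ index the block in which the switch happens. Blocks $1, \ldots, k-1$ each contain fewer than $c\delta_i = \delta_i/17$ plays (their total $B$ is a nonnegative integer) and jointly occupy the time interval $[0, K']$ where $K' := (k-1)(2\delta_i+1)$. From time $K'$ onward, $\P$ plays every step until it stops at some time $S \in [K', 2T]$. Hence $\P$ makes $B + (S - K')$ plays on this path, whereas $\P'$ makes $B + (\min(S, T/2) - K')$ plays, provided $K' \leq T/2$.

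The key calculation is to bound $K'$. Using $q = (1/8 + o(1))Q_i$ and $Q_i = T/\delta_i$ yields $K' \leq q(2\delta_i + 1) \leq (1/8 + o(1))(2T + Q_i) = T/4 + Q_i/8 + o(T)$. If $\delta_i \geq 17$ then $Q_i \leq T/17$, so $K' \leq T/4 + T/136 + o(T) < 2T/7 - \Omega(T)$; if $\delta_i < 17$ then $c\delta_i < 1$, so the very first play triggers the switch and $k = 1$, giving $K' = 0$. In both regimes $K' \leq 2T/7 - \Omega(T) \leq T/2$, so the formulas above for plays in $\P'$ apply. The desired inequality $(B + T/2 - K')/(B + S - K') \geq 1/8$ is minimized over $S \in [K', 2T]$ at $S = 2T$ and then over $B \geq 0$ at $B = 0$, reducing to $7K' \leq 2T$, which is exactly what we have established. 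Paths on which $\P$ already halts by time $T/2$ have $\P' = \P$ and satisfy the ratio trivially. Invoking Theorem~\ref{statetheorem} gives $R(\P') \geq R(\P)/8$.

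The main obstacle will be keeping the arithmetic sharp. The constants $(\alpha, c) = (1/8, 1/17)$ are calibrated precisely so that $K'$ stays strictly below $2T/7$ after absorbing the $o(1)$ slack in the well-structured guarantee; loosening $\alpha$ would push $K'$ past $2T/7$ and break the $\beta = 1/8$ bound, while $c = 1/17$ is exactly the largest threshold permitted by Lemma~\ref{lemthree} under $\alpha = 1/8$, so the two parameters are tightly coupled.
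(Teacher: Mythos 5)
Your proof is correct and takes essentially the same route as the paper's: bound the time occupied by the at most $(1/8+o(1))Q_i$ pre-switch blocks (roughly $T/4$), observe that all plays after the switch are contiguous, conclude that truncating at $T/2$ preserves at least a $1/8$ fraction of the plays on every decision path, and invoke the Truncation Theorem (with $N(\P') \le N(\P)$ immediate from truncation). If anything, your bookkeeping is slightly more careful than the paper's terse version, which simply asserts the blocks take at most $T/4$ steps and compares $T/4$ contiguous plays against at most $2T$ total; your exact threshold $7K' \le 2T$ and the separate case $\delta_i < 17$ (where $c\delta_i < 1$ forces an immediate switch) make explicit the slack the paper absorbs implicitly.
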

\begin{proof}
On any decision path in $\P$, there can be at most $Q_i(1/8 + o(1))$ blocks before the policy executes in delay-free mode. These blocks take up at most $T/4$ time steps, and the plays beyond that are delay-free. This implies that if the policy's horizon is truncated to $T/2$ time steps, and the policy has not stopped executing, it makes at least $T/4$ contiguous plays, whereas $\P$ could have made at most $2T$ plays. Therefore, on any decision path, the number of plays is reduced by at most a factor of $8$, and by the Truncation theorem~\ref{statetheorem}, the lemma follows.
\end{proof}

\noindent We term well-structured policies that terminate on encountering more than $T/2$ time steps as {\bf truncated well-structured} policies. The above lemmas can be summarized as:

\begin{theorem}
\label{trunc}
$(LP1)$ has a $1/\alpha$-approximation  over truncated $(\alpha,\frac{\alpha}{\alpha+2})$-well structured policies for $\alpha \le 1/8$.  This LP has the relaxed constraint:  $\sum_i N(\Pp_s(i,T/2)) \leq \gamma T$ where $\gamma = 2(1+1/\alpha)(1+2/\alpha)$. 
\end{theorem}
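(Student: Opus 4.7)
The plan is to chain Lemmas \ref{lemone}--\ref{lemfour} applied independently to each coordinate of an optimal LP1 solution, then sum the resulting per-arm bounds.

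First I would take an optimal LP1 solution $\{\Pp_s^*(i,T)\}$ with $\sum_i R(\Pp_s^*(i,T))=\finlp$ and $\sum_i N(\Pp_s^*(i,T))\le T$, and for each arm $i$ apply the four transformations in sequence. Lemma \ref{lemone} converts $\Pp_s^*(i,T)$ into a block-structured policy on horizon $2T$, with the reward non-decreasing and the expected number of plays non-increasing. Lemma \ref{lemtwo} with parameter $c=\alpha/(\alpha+2)$ then makes the policy $c$-delay-free, preserving reward while multiplying expected plays by a factor $1+1/c=2(1+1/\alpha)$. Lemma \ref{lemthree} is applied next with the same $\alpha$ and $c$: since $c$ saturates its hypothesis $c\le \alpha/(\alpha+2)$, it outputs an $(\alpha,c)$-well-structured policy on horizon $2T$, preserving reward and multiplying plays by a further factor $1+2/\alpha$. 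Finally, Lemma \ref{lemfour} truncates the horizon from $2T$ down to $T/2$; at $\alpha=1/8$ this costs at most a factor $1/\alpha=8$ in reward and does not increase the number of plays.

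Composing the four steps for a single arm yields a truncated $(\alpha,\alpha/(\alpha+2))$-well-structured policy $\Pp_i'$ of horizon $T/2$ with $R(\Pp_i')\ge \alpha\, R(\Pp_s^*(i,T))$ and $N(\Pp_i')\le 2(1+1/\alpha)(1+2/\alpha)\,N(\Pp_s^*(i,T))=\gamma\, N(\Pp_s^*(i,T))$. Summing over $i$ and using feasibility of the LP1 solution, the ensemble $\{\Pp_i'\}$ witnesses $\sum_i R(\Pp_i')\ge \alpha\cdot\finlp$ while satisfying the relaxed constraint $\sum_i N(\Pp_i')\le \gamma T$. Since $\finlp$ upper-bounds the optimum of LP1, this is exactly the claimed $1/\alpha$-approximation over truncated $(\alpha,\alpha/(\alpha+2))$-well-structured policies.

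The main thing to be careful about is parameter bookkeeping rather than any new idea: the delay-free parameter $c$ of Lemma \ref{lemtwo} and the well-structured parameter $\alpha$ of Lemma \ref{lemthree} must be coupled through $c=\alpha/(\alpha+2)$, so that Lemma \ref{lemthree}'s hypothesis is met and the blow-up in expected plays collapses to the advertised $\gamma=2(1+1/\alpha)(1+2/\alpha)$. One also needs $\alpha\le 1/8$ so that the $2\alpha T$-long pre-delay-free prefix of a well-structured policy on horizon $2T$ fits inside the truncation window $T/2$, leaving enough contiguous delay-free plays for Lemma \ref{lemfour} to realize its $1/\alpha$ reward bound. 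Once these parameter choices are pinned down, the theorem follows by linearity of expectation over the arms.
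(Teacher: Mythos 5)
Your proposal is correct and matches the paper's intent exactly: the paper proves Theorem~\ref{trunc} by the same chaining of Lemmas~\ref{lemone}--\ref{lemfour} (it literally states ``the above lemmas can be summarized as''), with the identical parameter coupling $c=\alpha/(\alpha+2)$ yielding the plays blow-up $2(1+1/\alpha)(1+2/\alpha)=\gamma$ and, at $\alpha=1/8$, the pair $(1/8,1/17)$ of Lemma~\ref{lemfour}. Your added observations --- that Lemma~\ref{lemfour}'s argument extends from $\alpha=1/8$ to all $\alpha\le 1/8$ because the $2\alpha T$ block prefix still fits under the $T/2$ truncation, and that the reward loss factor is then at most $8\le 1/\alpha$ --- are exactly the bookkeeping the paper leaves implicit.
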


We now formulate a new LP relaxation over truncated well-structured policies constructed in Theorem~\ref{trunc}. This new LP is polynomial over $|\S_i|$ and we avoid the exponential dependence on $\delta_i$ due to the above transformations. Towards this end, we explicitly write out the state space of a truncated $(\alpha,c)$-well-structured policy for a single arm. For $u \in \S_i$, we create a state $W(u,t)$ for $1 \le t \le T/2$. The meaning of this state is as follows: The posterior at the beginning of a block is $u \in \S_i$, and $t$ time steps have elapsed so far.  This block is either regular (in which case, at most $\alpha Q_i$ blocks have elapsed so far), or no-delay. For no-delay blocks, we can assume the block-length is $1$ and feedback is instantaneously available; this unifies the presentation of the LP. (The feedback is actually from plays made $\delta_i$ steps ago, but we can couple the two executions.)

For any state $\sigma = W(u,t)$ of arm $i$, define the following quantities. These quantities are an easy computation from the description of $\S_i$, and the details are omitted.
\begin{enumerate}
\item Let $r_i(\sigma,\ell)$ denote the expected reward obtained when   consecutive $\ell$ plays are made at state $\sigma$, and feedback is  obtained at the very end. Note that for regular state $\sigma$, we have $0 \le \ell c \delta_i$. For a no-delay state, $0   \le \ell \le 1$.
\item Let $p_i(\sigma,\sigma',\ell)$ denote the probability that if  the state at the beginning of a block is $\sigma$, and $\ell$ plays are  made in the block, the state at the beginning of the next block is  $\sigma'$. 
\end{enumerate}

We will formulate an LP to find a randomized truncated well-structured policy $\Pp$. Define the following variables over the decision tree of the policy:

\begin{itemize}
\item $x_{i\sigma}$: the probability that the state for arm $i$ at the start of a block is $\sigma$.
\item $y_{i\sigma \ell}$: probability that the policy  for arm $i$ makes $\ell$ consecutive plays starting at state  $\sigma$.
\end{itemize}

We have the following LP relaxation, which simply encodes finding one randomized well-structured policy per arm so that the expected number of plays made is at most $\gamma T$. The objective to this LP is $\Omega(OPT)$ as a consequence of Theorem~\ref{trunc}. 

{\small
\[ LP2 = \mbox{Maximize}  \sum_{i,\sigma,\ell} r_i(\sigma,\ell) y_{i\sigma \ell} \]
\[  \begin{array}{rcll}
\sum_{\sigma} x_{i\sigma} & \le & 1 & \forall i \\
\sum_{\ell} y_{i \sigma \ell} & \le & x_{i \sigma} & \forall i, \sigma \\
\sum_{\sigma,\ell} y_{i \sigma \ell} p_i(\sigma, \sigma', \ell) & = & x_{i \sigma'} & \forall i, \sigma' \\
\sum_{i,\sigma,\ell} \ell y_{i \sigma \ell} & \le & \gamma T & \\
x_{i \sigma}, y_{i \sigma \ell} & \ge & 0  & \forall i, \sigma, \ell
\end{array} \]
}

\newcommand{\Ppr}{{{\cal P}^{r}}}

\medskip
\noindent{\bf The LP policy:} 
We scale down the LP solution by a factor of $\gamma$ so that the final constraint becomes $\sum_{i,\sigma,\ell} \ell y_{i \sigma   \ell} \le T$; denote the new LP as $(LP2s)$ for convenience. $(LP2s)$ yields one randomized well-structured policy $\Ppr(i,T/2)$ for each arm $i$. This policy is succinctly described in Figure~\ref{fig1a}.

\begin{figure*}[htbp]
\centerline{\framebox{\small
\begin{minipage}{6in}
{\bf Policy $\Ppr(i,T/2)$:}
\begin{itemize}
\item If the state at the beginning of a block is $\sigma$: 
\begin{enumerate}
\item Choose $n$ with probability $\frac{y_{i \sigma n}}{x_{i \sigma}}$, and make $n$ plays in the current block.
\item Wait till the end of the block; obtain feedback for the $n$ plays; and update state.
\end{enumerate}
\end{itemize}
\end{minipage}}
}
\caption{\label{fig1a} Single-arm policy $\Ppr(i,T/2)$.}
\end{figure*}

\subsection{Step 4: Priority Based Combining of Different Policies}
At this point we have a collection of randomized policies $\Ppr(i,T/2)$ such that $\sum_i N(\Ppr(i,T/2)) \leq T$ and $\sum_{i} R(\Ppr(i,T/2)) = \Omega(OPT)$. We now show how to  combine the single arm policies $\Ppr(i,T/2)$ to achieve a globally approximate and feasible solution.  This step is not complicated but we note that multiple policies must remain active in the combination. This aspect necessitates a novel priority based scheme for combining the policies. 

Consider the execution of $\Ppr(i,T/2)$. We describe the arm as {\em active} if it is either making plays or is at the beginning of a block where it can make plays; and {\em passive} if it is waiting for feedback on the plays within the block. Any arm which completed its waiting for the feedback turns from passive to active mode. The final policy is shown in Figure~\ref{fig2a}.

\begin{figure*}[htbp]
\centerline{\framebox{\small
\begin{minipage}{6in}
\begin{enumerate}
\item Choose an arbitrary order the arms $\{i\}$ 
  denoted by $\pi$. 
\item Each arm ``participates'' with  probability $1/4$. Initially, all participating arms are active.
\item On each new play, among all participating arms:
\begin{enumerate}
\item Find the lowest  rank arm, say $i'$, that is active.  
\item Allocate the current play to $i'$ according to the policy $\Ppr(i',T/2)$. 
\item (As a result of this allocation, the arm may become passive and wait for feedback.) 
\end{enumerate} 
\end{enumerate}
\end{minipage}}
}
\caption{\label{fig2a} The final policy {\bf Combine}.}
\end{figure*}

The next lemma follows from an application of Markov's inequality in the same spirit as prior work without delayed feedback~\cite{GM09}. 

\begin{lemma}
\label{acct}
The expected contribution of $i$ from the combined policy is at least $R(\Ppr(i,T/2))/8$.
\end{lemma}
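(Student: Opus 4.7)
The plan is to show that with probability at least $1/8$ arm $i$ gets to execute $\Ppr(i,T/2)$ to completion inside the combined policy; then its expected contribution is at least $R(\Ppr(i,T/2))/8$.

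First, condition on arm $i$ being a participating arm, which occurs with probability $1/4$ independently of all other randomness. Let $H$ denote the total number of plays executed by higher-priority arms (those appearing before $i$ in $\pi$) in the combined run. Each higher-priority arm $j$ preempts arm $i$ unconditionally and executes $\Ppr(j,T/2)$ subject only to preemption by even-higher-priority arms, so its expected plays in the combined run are at most $N(\Ppr(j,T/2))$ multiplied by its independent $1/4$ participation probability. Summing,
$$\E[H] \le \tfrac{1}{4}\sum_{j<i} N(\Ppr(j,T/2)) \le \tfrac{T}{4}.$$
By Markov's inequality, $\Pr[H \le T/2] \ge 1/2$, and since $H$ is determined by arms other than $i$, the same bound holds after conditioning on arm $i$'s participation.

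The key claim is that, conditioned on arm $i$ participating and $H \le T/2$, arm $i$ executes $\Ppr(i,T/2)$ to completion within the overall horizon $T$. The horizon of $\Ppr(i,T/2)$ is $T/2$: on every decision path the total count of arm $i$'s own plays and its within-block feedback waits is at most $T/2$ time steps. During each $\delta_i$-step feedback wait the physical delay elapses in parallel with whatever lower-priority arms are playing, so lower-priority plays are ``free'' from arm $i$'s perspective. The only extra real time that arm $i$ absorbs beyond its own $T/2$ budget is preemption by higher-priority arms, which totals exactly $H$. Hence arm $i$ completes by real time $T/2 + H \le T$, after which the standard analysis of a randomized single-arm policy yields expected reward $R(\Ppr(i,T/2))$ for arm $i$ (this is where we use independence of arm $i$'s internal randomness from the participation of the other arms, in the spirit of Theorem~\ref{unitmab}).

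Combining everything,
$$\E[\text{reward from } i] \ge \Pr[i \text{ participates}] \cdot \Pr[H \le T/2] \cdot R(\Ppr(i,T/2)) \ge \tfrac{1}{4}\cdot\tfrac{1}{2}\cdot R(\Ppr(i,T/2)) = R(\Ppr(i,T/2))/8.$$
The main obstacle is the completion claim: one must carefully disentangle the three ways real time is consumed from arm $i$'s point of view --- its own plays, its feedback waits (during which lower-priority plays are charged to the physical delay, not to arm $i$), and higher-priority preemptions --- to conclude that only the first and third of these eat into the real-time budget $T$. This decoupling is precisely what justifies bounding the slack needed by arm $i$ using just $H$ and invoking Markov.
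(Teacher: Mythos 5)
Your proof is correct and takes essentially the same route as the paper's: condition on arm $i$ participating (probability $1/4$), apply Markov's inequality to the plays of higher-priority participating arms (expected at most $T/4$, hence at most $T/2$ with probability at least $1/2$), and conclude that in this event every decision path of $\Ppr(i,T/2)$ completes within the global horizon, giving expected contribution at least $R(\Ppr(i,T/2))/8$. Your explicit real-time accounting --- arm $i$'s own timeline of plays and waits spans at most $T/2$ steps, lower-priority plays are absorbed into its feedback waits, and only the $H \le T/2$ higher-priority preemptions add delay, so arm $i$ finishes by time $T/2 + H \le T$ --- is simply a more detailed spelling-out of the paper's terser claim that the single-arm policy ``executes without any interference.''
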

\begin{proof} 
First observe that $i$ being active does not impact any play of $G_i$.  The expected number of plays made by $G_i$  is $\sum_{i' \in G_i} N(\Ppr(i',T/2))/4 \leq T/4$. Thus by Markov's Inequality, the arms in $G_i$ make $T/2$ plays  with probability at most $1/2$. Next note that with probability $1/8$, arm $i$ is active  and in a scenario where the number of plays made by bidders before $i$  is at most $T/2$. This means {\bf every} decision path of  $\Ppr(i,T/2)$ can complete in this scenario (since $i$ gets higher  priority that all bidders following it). Thus with probability $1/8$ the  policy $\Ppr(i,T/2))$ executes without any interference.  Thus the   expected contribution from arm $i$ is at least   $R(\Ppr(i,T/2))/8$. 
\end{proof}

The above lemma shows Theorem~\ref{thmone} by linearity of expectation, concluding the  analysis.

\newpage

\appendix

\newpage

\section{Budgeted Allocation with Delayed Feedback}
\label{part3}
Recall that the budgeted allocation problem is a collection of MAB
instances, one for each impression type. For each impression of type
$j$, the MAB instance $M(T_j)$ is over $T_j$ arrivals (or plays) of
this impression. (Note that $\sum_j T_j = T$.) The arms for this
impression are denoted $(i,j)$, where $i$ denotes the bidders. If a
play along $(i,j)$ is successful (which is observed after
$\delta_{ij}$ arrivals of impressions of type $j$), bidder $i$ accrues
reward $b_{ij}$; the probability of success is drawn from an
independent prior $\D_{ij}$. The arms are connected by the budget
constraint stating that the total reward accrued by bidder $i$ across
all MAB instances is at most $B_i$.  

\medskip
\noindent {\bf Algorithm and Analysis:} The overall algorithm is
similar to that for a single MAB instance.
\begin{enumerate}
\item  The initial LP relaxation (that we don't solve) encodes
constructing one randomized policy per arm $(i,j)$ such that: (i) The
total expected number of plays for MAB $M_j$ is at most $T_j$; and
(ii) The total expected reward accrued for bidder $i$ is at most
$B_i$. Note that the latter constraint was not present in $(LP1)$ in
the previous section. 
\item With a loss of factor $2$ in the reward, we define ``shadow budgets'' $B_{ij} \leq B_i$ 
such that $B_{ij}/b_{ij}$ are integers, as discussed in Section~\ref{part1}.
\item For each $(i,j)$, we transform the single-arm policy (with the
  budgets $B_{ij}$) just as in the previous section into a
  well-structured policy using $T_j/2$ impressions.
\item We solve an LP over truncated well-structured policies, with the
additional constraint that the total reward accrued for bidder $i$ is
at most $B_i$. Let the LP objective value be $M$.
\item To construct the global policy, we run the policy {\bf Combine} independently
for each $j$. Let the randomized well-structured policy for arm
$(i,j)$ that is constructed by the LP be denoted $\Ppr(i,j,T_j/2)$,
and let $Z_{ij}$ denote the reward achieved. Note $Z_{ij} \leq B_{ij} \leq B_i$.
\end{enumerate}

\noindent The analysis in Section~\ref{part1} shows that the policy {\bf
  Combine} yields expected reward $\sum_{i,j} \E[Z_{ij}] = \Omega(M)$.
However we can only extract a reward of $\min\{B_i,\sum_j
Z_{ij}\}$ from advertiser $i$. We show: 

\begin{lemma}\label{mincount}
For a collection of random variables $\{Z_{ij}\}$, if $\sum_j
\E[Z_{ij}] \leq B_i$ and $Z_{ij} \leq B_i$ then $\E[ \min \{ B_i, \sum_j Z_{ij} \}] \geq
\frac12 \sum_j \E[Z_{ij}]$.
\end{lemma}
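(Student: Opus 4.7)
The plan is to exploit the independence of $Z_{ij}$ across $j$ (which holds because \textbf{Combine} is run independently on each MAB instance $M(T_j)$) together with the hypothesis $Z_{ij}\le B_i$, via a sequential budget-depletion argument. I would order the MAB instances $j=1,\ldots,m$ arbitrarily, track the residual budget by $R_0 = B_i$ and $R_j = (R_{j-1}-Z_{ij})^+$, and note the telescoping identity $\min\{B_i,\sum_j Z_{ij}\} = B_i - R_m$. It therefore suffices to prove the bound $\E[R_m] \le B_i - \tfrac{1}{2}\sum_j \E[Z_{ij}]$.

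The key pointwise inequality I would use is $\min(Z,R) \ge ZR/B_i$ whenever $0 \le Z,R \le B_i$, which follows by checking two cases: if $Z\le R$, then $\min(Z,R)=Z \ge Z(R/B_i)$ since $R/B_i\le 1$; if $Z>R$, then $\min(Z,R)=R \ge R(Z/B_i)$ since $Z/B_i\le 1$. Since $Z_{ij}$ is independent of $R_{j-1}$ (the latter is a function of $Z_{i1},\ldots,Z_{i,j-1}$), taking expectations factorizes and yields $\E[\min(Z_{ij},R_{j-1})] \ge (\mu_j/B_i)\,\E[R_{j-1}]$, where $\mu_j := \E[Z_{ij}]$. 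Hence $\E[R_j] \le \E[R_{j-1}](1-\mu_j/B_i)$, and unrolling the recursion with $\mu := \sum_j \mu_j \le B_i$ gives $\E[R_m] \le B_i\prod_j(1-\mu_j/B_i) \le B_i\,e^{-\mu/B_i}$.

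Combining these steps, $\E[\min\{B_i,\sum_j Z_{ij}\}] \ge B_i(1-e^{-\mu/B_i})$, and the elementary calculus inequality $1-e^{-x} \ge x/2$ on $[0,1]$, applied with $x=\mu/B_i \le 1$, delivers the claimed lower bound $\mu/2 = \tfrac{1}{2}\sum_j \E[Z_{ij}]$.

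The main obstacle is spotting the right way to convert the pointwise upper bound $Z_{ij}\le B_i$ into a usable expectation estimate; the inequality $\min(Z,R)\ge ZR/B_i$ is exactly what produces the multiplicative decay of $\E[R_j]$, and the hypothesis $\mu\le B_i$ is essential to keep $x\le 1$ so that $1-e^{-x}\ge x/2$ pins down the constant $\tfrac{1}{2}$ (for larger $x$ this tilts the other way). A secondary point worth checking is that one need not solve the LP to equality: the argument never uses $\sum_j\mu_j = B_i$, only $\sum_j\mu_j \le B_i$, which is the hypothesis provided by the prior step of the algorithm.
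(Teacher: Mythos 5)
Your proof is correct, and its setup coincides with the paper's: the paper also settles the bill sequentially, defining payouts $Y_j = \min \left \{ Z_{ij}, \max \{B_i - \sum_{j'<j} Z_{ij'}, 0 \} \right \}$, which is exactly your $\min(Z_{ij},R_{j-1})$, and it invokes the same pointwise inequality $\min\{a,b\} \geq ab$ for $a,b \in [0,1]$ together with independence of $Z_{ij}$ from the earlier $Z_{ij'}$ to factorize expectations. Where you genuinely diverge is the final accounting. The paper relaxes each term to $\E[Y_j] \geq \left(1 - \frac{\sum_{j'<j} \E[Z_{ij'}]}{B_i}\right)\E[Z_{ij}]$ and then sums these additively, lower-bounding the total by the area of the triangle under a staircase (the same geometric device as Lemma~\ref{concavechain}), which yields $\frac{1}{B_i}\cdot\frac12 B_i \mu = \mu/2$ directly. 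You instead keep the recursion on the expected residual budget, $\E[R_j] \leq \E[R_{j-1}](1-\mu_j/B_i)$, unroll to the product bound $B_i\prod_j(1-\mu_j/B_i) \leq B_i e^{-\mu/B_i}$, and finish with the calculus fact $1-e^{-x} \geq x/2$ on $[0,1]$. Your multiplicative bound is termwise at least as tight as the paper's additive one (by Bonferroni, the product is dominated by the truncated inclusion--exclusion sum implicit in the staircase), and both approaches use the hypothesis $\mu \leq B_i$ in exactly one place: for you it keeps $x \leq 1$ so the constant $\frac12$ survives, for the paper it keeps the staircase above the chord of the triangle. One point worth noting: both your argument and the paper's quietly require $Z_{ij} \geq 0$ and independence across $j$, neither of which appears in the lemma statement as written; without independence the lemma is false (take $m$ identical copies of a variable equal to $B_i$ with probability $1/m$, where the left side is $B_i/m$ but the claimed bound is $B_i/2$). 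You at least make the independence explicit and correctly trace it to \textbf{Combine} being run independently for each $j$, which matches the paper's remark that $Y_j$ is independent of $Z_{ij'}$ for $j'<j$.
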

\begin{proof}
Consider a process where advertiser $i$ settles the bill for $j$, before proceeding to the bill for $j+1$. If the payout on $j$ is $Y_j$ then $Y_j = \min \left \{ Z_{ij}, \max \{B_i - \sum_{j'<j} Z_{ij'}, 0 \} \right \}$. 
{\small
\[ Y_j 
= B_i \min \left \{ \frac{Z_{ij}}{ B_i},  \max \left( 1 - \frac{\sum_{j'<j} Z_{ij'}}{B_i}, 0 \right) \right \} 
\geq \max \left(1 - \frac{\sum_{j'<j} Z_{ij'}}{B_i},0 \right)Z_{ij} \]
}
\noindent The last part follows from $\min\{a,b\} \geq ab$,  
for any $0\leq a,b \leq 1$. 
Clearly $\sum_j Y_j \leq B_i$ in every scenario. 
And as a consequence of $Y_{j}$ being independent of $Z_{ij'}$ for $j'<j$: 
{\small
\[ 
\E[Y_j] \geq \E \left[ \max \left(1 - \frac{\sum_{j'<j} Z_{ij'}}{B_i},0 \right) \right ] \E[Z_{ij}] \geq \left(1 - \frac{\sum_{j'<j} \E[Z_{ij'}]}{B_i} \right) \E[Z_{ij}] 
\]
}
\noindent We can estimate $\sum_j \E[Y_j] $ to be at least $\frac1{B_i}$ times the area
under the staircase defined by 
{\small
\[ (0,\E[Z_{i1}]),\ldots,
(\sum_{j'<j} \E[Z_{ij}], \sum_{j'\leq j} \E[Z_{i1}]),(\sum_{j'\leq j} \E[Z_{ij'}], \sum_{j'\leq j} \E[Z_{ij'}]),\ldots,(\sum_j  \E[Z_{ij}], \sum_{j} \E[Z_{ij}]),(B_i,\sum_{j} \E[Z_{ij}]) \] 
}

This area includes the triangle defined $(0,0)$, $(B_i,0)$ and
$(B_i,\sum_{j} \E[Z_{ij}])$ and thus the reward is therefore at least 
$\frac1{B_i} \frac12 B_i \sum_j \E[Z_{ij}] = \frac12 \sum_j \E[Z_{ij}]$. 
This proves the lemma.
\end{proof}

As a consequence of Lemma~\ref{mincount}, and that $M=\Omega(OPT)$ we have:
\begin{theorem}
\label{thmtwo}
  If $T_j \geq 48(8\delta_{ij}+2)\log T_j$ for all $,j$ then in time
  polynomial in $n,m$, and $\max_{i,j} |\S_{ij}|$, we can compute a policy whose
  expected reward is a $O(1)$ approximation to the expected reward of
  the optimal policy for the Bayesian budgeted allocation problem with delayed
  feedback.
\end{theorem}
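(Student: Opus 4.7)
The plan is to reduce Theorem~\ref{thmtwo} to the single-MAB analysis of Section~\ref{part2} applied independently per impression type $j$, combined with a budget-aggregation argument across impression types via Lemma~\ref{mincount}. Since the pipeline of Steps~1--4 in Section~\ref{part2} is per-arm and only couples arms through a single linear constraint (total expected plays per MAB), I would run it in parallel across all $(i,j)$, letting each MAB $M(T_j)$ enforce its play-budget constraint $\sum_i N(\P_{ij}) \le T_j$, and adding one extra family of linear constraints — one per bidder — to encode $\sum_j R(\P_{ij}) \le B_i$.

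Concretely, I would first discretize each $B_i$ into ``shadow budgets'' $B_{ij} \le B_i$ with $\sum_j B_{ij} \le B_i$ and $B_{ij}/b_{ij}$ an integer; a standard knapsack-style rounding loses only a factor of $2$. Folding $B_{ij}$ into the posterior space $\S_{ij}$ (exactly as in the budget folding described in Section~\ref{prelim}) makes the per-arm analysis budget-oblivious. Next, for each $(i,j)$ I invoke Lemmas~\ref{lemone}--\ref{lemfour} on horizon $T_j$, converting an arbitrary single-arm policy into a truncated $(1/8,1/17)$-well-structured policy on $T_j/2$ steps while losing only constant factors in reward and in expected number of plays. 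The hypothesis $T_j \geq 48(8\delta_{ij}+2)\log T_j$ is precisely what Lemma~\ref{lemthree} needs in order to guarantee $\log \delta_{ij} = o(Q_{ij})$ with $Q_{ij}=T_j/\delta_{ij}$, so that the compaction step actually compresses the policy.

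With these transformations in hand I would write an analog of $(LP2s)$ with variables indexed by $(i,j,\sigma,\ell)$, the per-$j$ constraint $\sum_{i,\sigma,\ell} \ell \, y_{(ij)\sigma\ell} \le T_j$, and the per-bidder constraint $\sum_{j,\sigma,\ell} r_{ij}(\sigma,\ell)\, y_{(ij)\sigma\ell} \le B_i$. The LP has size polynomial in $n,m,\max_{ij}|\S_{ij}|$, and Theorem~\ref{trunc} combined with the shadow-budget rounding shows that its optimum value $M$ is $\Omega(\mathrm{OPT})$. I would then run the priority scheme {\bf Combine} independently inside each MAB instance $M(T_j)$; Lemma~\ref{acct} applies verbatim inside each instance and yields random payoffs $Z_{ij}$ with $\E[Z_{ij}] = \Omega(R(\Ppr(i,j,T_j/2)))$, so $\sum_{i,j}\E[Z_{ij}] = \Omega(M) = \Omega(\mathrm{OPT})$.

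The main obstacle — and the reason Lemma~\ref{mincount} is the technical heart of this section — is that the actual revenue extracted from bidder $i$ is $\min\{B_i,\sum_j Z_{ij}\}$, not $\sum_j Z_{ij}$, because the $Z_{ij}$ in different MAB instances are independent random variables whose sum can exceed $B_i$ on bad realizations. To control this, I would rely on two facts built into the construction: the LP enforces $\sum_j \E[Z_{ij}] \le \sum_j B_{ij} \le B_i$, and the per-arm budget truncation inside $\Ppr(i,j,T_j/2)$ ensures $Z_{ij} \le B_{ij} \le B_i$ deterministically. Lemma~\ref{mincount} then yields $\E[\min\{B_i,\sum_j Z_{ij}\}] \ge \tfrac12 \sum_j \E[Z_{ij}]$ for every bidder $i$, and summing over $i$ gives expected revenue $\Omega(\sum_{i,j}\E[Z_{ij}]) = \Omega(\mathrm{OPT})$, which is exactly the claim of the theorem.
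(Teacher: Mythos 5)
Your overall route is the paper's own: fold per-pair shadow budgets into the posterior spaces, run the Step~1--4 pipeline per arm $(i,j)$ on horizon $T_j$, solve one joint LP with per-type play constraints and per-bidder reward constraints, run \textbf{Combine} independently inside each MAB instance $M(T_j)$, and finish with Lemma~\ref{mincount}. However, one step as you state it is wrong, and it contradicts your own final paragraph: you require the shadow budgets to satisfy $\sum_j B_{ij} \le B_i$. First, if that held, then together with the pathwise guarantee $Z_{ij} \le B_{ij}$ that budget folding provides, you would have $\sum_j Z_{ij} \le B_i$ on \emph{every} realization; the min in $\min\{B_i, \sum_j Z_{ij}\}$ would never bind and Lemma~\ref{mincount} --- which you correctly call the technical heart --- would be vacuous. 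Your stated worry that the sum ``can exceed $B_i$ on bad realizations'' is incompatible with your own construction. Second, enforcing a hard partition $\sum_j B_{ij} \le B_i$ costs more than a factor of $2$: take one bidder with $m$ impression types, $b_{ij} = B_i$ for all $j$, and each type producing a conversion (reward $B_i$) with probability $1/m$. Under the soft constraint $\sum_j \E[Z_{ij}] \le B_i$ the bidder can be played on all $m$ types and the extractable revenue is $\E[\min\{B_i, \sum_j Z_{ij}\}] \ge (1-1/e)B_i$, whereas with $B_{ij}/b_{ij}$ integral and $\sum_j B_{ij} \le B_i$ at most one type can receive a nonzero shadow budget, yielding expected revenue $B_i/m$. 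So the budget genuinely cannot be split pathwise across types within a constant factor of OPT; this is precisely why the paper keeps the budget constraint only in expectation in the LP.

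The repair is exactly what the paper does, and every needed ingredient is already in your write-up: drop $\sum_j B_{ij} \le B_i$ and require only $B_{ij} \le B_i$ per pair, which after folding $B_{ij}$ into $\S_{ij}$ gives the hypothesis $Z_{ij} \le B_{ij} \le B_i$ of Lemma~\ref{mincount}. The other hypothesis, $\sum_j \E[Z_{ij}] \le B_i$, does not come from any relation among the $B_{ij}$ but directly from the per-bidder LP constraint $\sum_{j,\sigma,\ell} r_{ij}(\sigma,\ell)\, y_{(ij)\sigma\ell} \le B_i$ that you yourself wrote down, combined with the observation that \textbf{Combine} only removes plays from $\Ppr(i,j,T_j/2)$ and hence $\E[Z_{ij}] \le R(\Ppr(i,j,T_j/2))$ by the martingale accounting. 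With that chain replacing ``$\sum_j \E[Z_{ij}] \le \sum_j B_{ij} \le B_i$,'' the rest of your argument --- the independence of the $Z_{ij}$ across $j$ because \textbf{Combine} runs independently per instance (which the proof of Lemma~\ref{mincount} genuinely uses), the reading of $T_j \ge 48(8\delta_{ij}+2)\log T_j$ as guaranteeing $\log \delta_{ij} = o(Q_{ij})$ for the compaction of Lemma~\ref{lemthree}, and the accounting $\sum_{i,j}\E[Z_{ij}] = \Omega(M) = \Omega(OPT)$ via Theorem~\ref{trunc} and Lemma~\ref{acct} --- coincides with the paper's proof.
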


\section{Example of Priors: Beta Distributions} 
\label{prelim2}
Consider a fixed but unknown distribution
over two outcomes (success or failure of treatment, click or no click,
conversion or no conversion). This is a Bernoulli$(1,\theta)$ trial
where $\theta$ is unknown.  One way of encoding the uncertainty about
$\theta$ is to use a distribution which encodes our ``prior'' belief
regarding $\theta$. However on getting a new sample, we update our
belief to a ``posterior distribution'' using Bayes' rule. Note that we
need a family of distributions and associated update rules to specify
the prior.  For two valued outcomes, one such family is the Beta
distribution, referred to as the conjugate prior of the Bernoulli
distribution.  A Beta distribution with parameters $\alpha_0,\alpha_1
\in \mathbb{Z}^+$, which we denote $Beta(\alpha_1,\alpha_0)$ has
p.d.f. of the form $c\theta^{\alpha_1-1} (1-\theta)^{\alpha_0-1}$,
where $c$ is a normalizing constant. $Beta(1,1)$ is the uniform
distribution.  The distribution $Beta(\alpha_1,\alpha_0)$ corresponds
to the current (posterior) distribution over the possible values of
$\theta$ after having observed $(\alpha_1-1)$ $1$'s and $(\alpha_0-1)$
$0$'s, starting from the belief that $\theta$ was uniform, distributed
as $Beta(1,1)$.  Given the distribution $Beta(\alpha_1,\alpha_0)$ as
our prior, the expected value of $\theta$ is
$\frac{\alpha_1}{\alpha_1+\alpha_0}$.  In this example, updating the
prior on a sample is straightforward. On seeing a $1$, the posterior
(of the current sample, and the prior for the next sample) is
$Beta(\alpha_1+1,\alpha_0)$. On seeing a $0$, the new distribution is
$ Beta(\alpha_1,\alpha_0+1)$.  As stated earlier, the hard case for
our algorithms (and the case that is typical in practice) is when the
input to the problem is a set of arms $\{i\}$ with priors $\D_i \sim
Beta(\alpha_{1i},\alpha_{0i})$ where $ \alpha_{0i} \gg \alpha_{1i}$
which corresponds to a set of {\bf poor prior expectations} of the
arms.

\paragraph{Representing Priors as a DAG.}  Given a family of priors we
model them as a directed acyclic graph (DAG) where the the vertices
correspond to the priors and posteriors.  The root $\rho$ encodes the
initial prior $\D$. The children of $\rho$ correspond to the possible
observations of a sample drawn from $\D$. In the case of Beta priors
and two valued outcomes, a node $u$ corresponding to
$Beta(\alpha_1,\alpha_0)$ would have two children corresponding to
$Beta(\alpha_1+1,\alpha_0),Beta(\alpha_1,\alpha_0+1)$ and the edges
will be labeled $1,0$ respectively. The probability of taking the edge
labeled $1$ from $u$ to the node $v$ corresponding to
$Beta(\alpha_1+1,\alpha_0)$ would be the probability of observing a
$1$ given the prior $Beta(\alpha_1,\alpha_0)$, which is
$\frac{\alpha_1}{\alpha_1+\alpha_0}$. We will define
$\p_{uv}=\frac{\alpha_1}{\alpha_1+\alpha_0}$. At each node $u$ the
probability of observing $1$ corresponds to the ``reward'' $\mu_u$.
Thus the DAG represents the evolution of priors/posteriors completely,
and further (for two valued outcomes) each node is represented by a
pair of numbers. Therefore for a horizon of size $T$, the entire
information is captured by a DAG of
size $O(T^2)$.

\medskip
  \noindent{\bf Martingale Property:} A {\em critical} observation, true
  of all conjugate prior distributions and the resulting DAGs, is that
  we have a {\bf Martingale property:} $\mu_u = \sum_{v \in \S_i}
  \mu_v \p_{uv}$ (summing over the children of $u$). {\em Therefore
    this gives us a way of switching between two different views of
    the same evolution of the posterior space}: the first view is where
  the $p_i$ (equivalently $D_i$) is chosen according to $\D_i$ and we
  observe a collection of trajectories and take expectation over them,
  and the second view is where we have a Martingale process of
  the posterior spaces in the DAG.

\end{document}